\newtheorem{theorem}{Theorem}
\newtheorem{lemma}{Lemma}
\newtheorem{corollary}{Corollary}
\newtheorem{defn}{Definition}
\newtheorem{rmk}{Remark}
\title{\LARGE \bf
	 Directed Formation Control of \textit{n} Planar Agents with Distance and Area Constraints
}
\author{
	Tairan Liu, Marcio de Queiroz, Pengpeng Zhang and Milad Khaledyan%
	\thanks{Tairan Liu, Marcio de Queiroz, and Pengpeng Zhang are with the Department of Mechanical and Industrial Engineering, Louisiana State University,
		Baton Rouge, LA 70803, USA
		{\tt\small (Email: tliu7@lsu.edu; mdeque1@lsu.edu; pzhan16@lsu.edu).}
		Milad Khaledyan is with the Department of Electrical and Computer Engineering,
		University of New Mexico, Albuquerque, NM 87131, USA
		{\tt\small (Email: milad@unm.edu).}
	}%
}
\begin{document}
	
	\maketitle
	\thispagestyle{empty}
	\pagestyle{empty}
	
	\begin{abstract}                          
		In this paper, we take a first step towards generalizing a recently proposed method for dealing with the problem of convergence to incorrect equilibrium points of distance-based formation controllers. Specifically, we introduce a distance and area-based scheme for the formation control of $n$-agent systems in two dimensions using directed graphs and the single-integrator model. We show that under certain conditions on the edge lengths of the triangulated desired formation, the control ensures almost-global convergence to the correct formation.
	\end{abstract}

\section{Introduction}

Formation control is an important problem in multi-agent coordination and
cooperation where the objective is for agents to form a prescribed
geometric shape in space. This requirement is intrinsic to tasks such as
area coverage, perimeter protection, and co-transportation of large objects.

One of two methods are typically used in formation control: i) regulate the 
\textit{relative position} of certain agent pairs to prescribed values \cite%
{olfati2004consensus,ren2008distributed}, or ii) regulate a set of
inter-agent \textit{distances} (magnitude of the relative position vector)
to prescribed values \cite{dorfler2010geometric,krick2009stabilisation}. The
first method requires the agents to have a common global coordinate frame or
that their local coordinate frames be aligned which may not be feasible
in practice. On the other hand, the feedback variables in the second method
can be calculated in each agent's local coordinate frame, which do not have
to be aligned with a global coordinate frame or with each other. As a
result, the desired formation is, at best, only acquired up to translation and
rotation; i.e., the agents can converge to any formation that is isomorphic
to the desired one.

An important consideration in the distance-based method is how to prevent
agents from converging to a formation that is equivalent but noncongruent to
the desired one (see Section \ref{Sec: und graph} for the formal definitions
of equivalency and congruency). Such formations are undesirable because they
do not have the same shape or orientation as the prescribed formation,
although they satisfy the set of distance constraints. In other words, the
distance constraints do not uniquely define the relative positions of the
agents and lead to positional ambiguities \cite{anderson2008rigid}. Rigid graph
theory provides a partial solution to this problem by requiring the
formation graph to be rigid\textit{\ }\cite%
{asimow1979rigidity,izmestiev2009infinitesimal}. Specifically, imposing a
minimum number of distances to be controlled reduces the undesirable
\textquotedblleft equilibrium points\textquotedblright\ to formations that
are flipped/reflected versions of the desired one \cite{anderson2008rigid}.
Then, the determining factor whether convergence is to a congruent formation
or a flipped formation is the initial condition of the rigid formation. That
is, rigidity distance-based formation controllers only have local stability
properties.

A few approaches have been recently proposed to address the aforementioned
issues with distance-based controllers. In \cite{ferreira2016distance}, a
combination of inter-agent distance and angular constraints was used to
reduce the likelihood of convergence to noncongruent formations in two
dimensions (2D). Although the region of attraction of the desired
equilibrium can be somewhat enlarged by a proper choice of control gains,
the stability of the control proposed in \cite{ferreira2016distance} is
still local in nature. An extension of this work to 3D appeared in \cite%
{ferreira2016adaptive} by using area and volume constraints. The control
method avoids flipped formations but introduces other undesired equilibrium
points due to the multiple local minima of the proposed potential function.
A related approach was introduced in \cite{anderson2017formation} for the
single-integrator agent model where the \textit{signed} area of a triangle
was employed as a controlled variable to prevent flipped formations. That
is, the sign of the area enclosed by the formation along with the
inter-agent distances were used to uniquely define the correct formation up
to translation and rotation. The formation control law in \cite{anderson2017formation} was based on the gradient of a potential function
that incorporates distance error and signed area error terms and on the use of undirected graphs (i.e., bidirectional sensing and control). Convergence
analyses were conducted for special cases of 3- and 4-agent planar
formations.

The purpose of this paper is to explore the approach introduced in \cite{anderson2017formation} further. Specifically, we aim to generalize the approach to systems of $n$ agents while introducing explicit, sufficient conditions for convergence to the 2D desired formation. The key to our solution is triangulating the directed formation graph to facilitate the use of interconnected system theory. The use of a directed graph has the added benefit of leading to a unidirectional formation controller. Under our solution, mild conditions are imposed on the edge lengths of the interconnected triangles, and the overall formation graph is required to be a Leader-First-Follower (LFF) type of minimally persistent directed graph \cite{summers2011control}. We show that our gradient-type control law ensures convergence to the desired formation as long as the leader and first follower are not initially collocated. That is, no restrictions are placed on the initial conditions of the ordinary followers. The closed-loop system is proven to have an \textit{almost-global} asymptotic equilibrium point corresponding to the desired formation. Thus, the collinear invariant set and flipped formation problems are hereby solved by the proposed control scheme. We note that our result is not a straightforward extension of \cite{anderson2017formation} since the interconnected triangles form coupled nonlinear subsystems, which complicates the stability analysis of the overall system.


\section{Background Material\label{Sec: Back mat}}

\subsection{Undirected Graphs\label{Sec: und graph}}

An undirected graph $G$ is represented by a pair $(V,E)$, where $%
V=\{1,2,...,n\}$ is the set of vertices and $E=\{(i,j)|\,i,j\in V,i\neq
j\}\subset V\times V$ is a set of undirected edges. The total number of
edges in $E$ is denoted by $a\in \{1,...,n(n-1)/2\}$. The set of neighbors
of vertex $i\in V$ is represented by 
\begin{equation}
\mathcal{N}_{i}(E)=\{j\in V|(i,j)\in E\}.  \label{neighbor}
\end{equation}%
If $p=[p_{1},...,p_{n}]\in \mathbb{R}^{2n}$ where $p_{i}\in \mathbb{R}^{2}$
is the coordinate of the $i$th vertex, then a framework $F$ is defined as
the pair $(G,p)$.

The edge function $\gamma :\mathbb{R}^{2n}\rightarrow \mathbb{R}^{a}$ is
defined as 
\begin{equation}
\gamma (p)=[...,||p_{i}-p_{j}||^{2},...],\,(i,j)\in E
\label{eq:edge-function}
\end{equation}%
such that its $m$th component, $||p_{i}-p_{j}||$, relates to the $m$th edge
of $E$ connecting the $i$th and $j$th vertices. The rigidity matrix $R:%
\mathbb{R}^{2n}\rightarrow \mathbb{R}^{a\times 2n}$ is given by 
\begin{equation}
R(p)=\frac{1}{2}\frac{\partial \gamma (p)}{\partial p}  \label{R}
\end{equation}%
where we have that $\text{rank} [R(p)]\leq 2n-3$ \cite{asimow1979rigidity}.
Frameworks $(G,p)$ and $(G,\hat{p})$ are equivalent if $\gamma (p)=\gamma (%
\hat{p})$, and are congruent if $||p_{i}-p_{j}||=||\hat{p}_{i}-\hat{p}_{j}||$%
,$\,\ \forall i,j\in V$ \cite{jackson2007notes}.

An isometry of $\mathbb{R}^{2}$ is a map $\mathcal{T}:\mathbb{R}%
^{2}\rightarrow \mathbb{R}^{2}$ satisfying \cite{izmestiev2009infinitesimal} 
\begin{equation}
||w-z||=||\mathcal{T}(w)-\mathcal{T}(z)||,\,\forall w,z\in \mathbb{R}^{2}.
\label{isometry}
\end{equation}%
This map includes rotation and translation of the vector $w-z$. Two
frameworks are isomorphic if they are correlated via an isometry. It is
obvious that (\ref{eq:edge-function}) is invariant under isomorphic motions
of the framework.

A framework $F=\left( G,p\right) $ is rigid in $\mathbb{%
\mathbb{R}
}^{2}$ if all of its motions satisfy $p_{i}(t)=\mathcal{T}(p_{i})$, $\forall i\in V$
and $\forall t\in \left[ 0,1\right] $; i.e., the family of frameworks $F(t)$
is isomorphic \cite{asimow1979rigidity,izmestiev2009infinitesimal}. Some
related notions of rigidity are the following. A generic framework $(G,p)$
is infinitesimally rigid if and only if $\text{rank} [R(p)]=2n-3$ \cite%
{izmestiev2009infinitesimal}. A rigid framework is said to be minimally
rigid if and only if $a=2n-3$ \cite{anderson2008rigid}. If the
infinitesimally rigid frameworks $(G,p)$ and $(G,\hat{p})$ are equivalent
but not congruent, then they are referred to as ambiguous \cite%
{anderson2008rigid} since the edge function cannot uniquely define the
framework. Common types of ambiguities are shown in Figure \ref{ambiguous}.
Note that reflected frameworks are an extreme form of flip ambiguity where
more than one vertex is flipped. In fact, reflections are the only form of
flip ambiguity that can occur in a triangular framework. 
\begin{figure}[htbp]
\centering
\adjincludegraphics[scale=0.8, trim={{0\width}
{0\height} {0\width} {0\height}},clip]{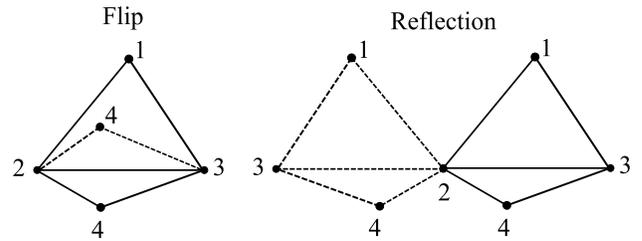}  
\caption{Types of ambiguous frameworks.}
\label{ambiguous}
\end{figure}

\subsection{Directed Graphs}

A directed graph $G$ is a pair $(V,E^{d})$ where the edge set $E^{d}$ is
directed in the sense that if $(i,j)\in E^{d}$ then $i$ is the source vertex
of the edge and $j$ is the sink vertex. For $i\in V$, the out-degree of $i$
(denoted by $\text{out}(i)$) is the number of edges in $E^{d}$ whose source is
vertex $i$ and sinks are in $V-\{i\}$. 

For directed graphs, the notion of rigidity defined in Section \ref{Sec: und
graph} is not enough to maintain the formation structure (see \cite%
{hendrickx2007directed} for an example), and two additional concepts are
needed. The first one is the notion of a \textit{constraint consistent }%
graph. As explained in \cite{hendrickx2007directed}, the intuitive meaning
of constraint consistency is that every agent is able to satisfy all its
distance constraints when all the others are trying to do the same. A\
sufficient condition for a directed graph $\left( V,E^{d}\right) $ in $%
\mathbb{R}
^{2}$ to be constraint consistent is that $\text{out} (i)\leq 2$ for all $i\in V$
(see Lemma 5 of \cite{yu2007three}). The second concept is graph \textit{%
persistency}, which has the meaning that, provided all agents are trying to
satisfy their distance constraints, the structure of the agent formation is
preserved \cite{hendrickx2007directed}. A directed graph is persistent if
and only if it is constraint consistent and its underlying undirected graph
is infinitesimally rigid (see Theorem 3 of \cite{yu2007three}). A persistent
graph in $%
\mathbb{R}
^{2}$ is said to be \textit{minimally persistent} if no single edge can be
removed without losing persistence. A necessary condition for a persistent
graph in $%
\mathbb{R}
^{2}$ to be minimally persistent is $\text{out} (i)\leq 2$ for all $i\in V$, while
a sufficient condition is minimal rigidity \cite{yu2007three}. Starting from
two vertices with an edge, a minimally persistent (resp., rigid) graph can
be constructed by the Henneberg insertion of type I \cite%
{bereg2005certifying}, i.e., iteratively adding a vertex with two outgoing
(resp., undirected) edges. Henceforth, we refer to a graph constructed in
this manner as a Henneberg graph.

\subsection{Signed Area}

The \textit{signed area} of a triangular framework, $S:\mathbb{R}%
^{6}\rightarrow \mathbb{R}$, is defined as \cite{anderson2017formation} 
\begin{equation}
S(p)=\frac{1}{2}\det 
\begin{bmatrix}
1 & 1 & 1 \\ 
p_{1} & p_{2} & p_{3}%
\end{bmatrix}%
=\frac{1}{2}\left( p_{3}-p_{1}\right) ^{\intercal }J\left(
p_{3}-p_{2}\right)   \label{area}
\end{equation}%
where%
\begin{equation}
J=\left[ 
\begin{array}{rr}
0 & 1 \\ 
-1 & 0%
\end{array}%
\right] .  \label{J}
\end{equation}%
This quantity is positive (resp., negative) if the vertices are ordered
counterclockwise (resp., clockwise). Further, (\ref{area}) is zero if any
two vertices are collocated or the three vertices are collinear.

A Henneberg framework can be divided into triangular sub-frameworks.
Therefore, the signed area of a Henneberg framework with $n$ vertices and
directed edge set $E^{d}$, $\chi :\mathbb{R}^{2n}\rightarrow \mathbb{R}^{n-2}
$, is defined as 
\begin{equation}
\begin{aligned}
	& \chi (p)= \left[ ...,\text{ }\frac{1}{2}\det 
	\begin{bmatrix}
	1 & 1 & 1 \\ 
	p_{i} & p_{j} & p_{k}%
	\end{bmatrix}%
	,\text{ }...\right] ,\,\quad \\
	& \forall (k,i),(k,j)\in E^{d}-\{(2,1)\}
\end{aligned}
\label{eq:area-function}
\end{equation}%
such that its $m$th component is related to the signed area of the $m$th
triangle constructed with vertices $i$, $j$, and $k$. For example, the
signed area of the framework in Figure \ref{fig:signed-area}a is given by 
\begin{multline}
\chi (p)=\Bigl[ \frac{1}{2}\left( p_{3}-p_{1}\right) ^{\intercal }J\left(
p_{3}-p_{2}\right) , \\
\text{ }\frac{1}{2}\left( p_{4}-p_{2}\right) ^{\intercal
}J\left( p_{4}-p_{3}\right) ,\text{ }\frac{1}{2}\left( p_{5}-p_{3}\right)
^{\intercal }J\left( p_{5}-p_{4}\right) \Bigr] ,
\label{eq:area-function-example}
\end{multline}
where the three elements of $\chi (p)$ are positive, negative, and positive,
respectively. For the framework in Figure \ref{fig:signed-area}b, it would be%
\begin{multline}
\chi (p)=\Bigl[ \frac{1}{2}\left( p_{3}-p_{1}\right) ^{\intercal }J\left(
p_{3}-p_{2}\right) , \\
\text{ }\frac{1}{2}\left( p_{4}-p_{1}\right) ^{\intercal
}J\left( p_{4}-p_{2}\right) ,\text{ }\frac{1}{2}\left( p_{5}-p_{3}\right)
^{\intercal }J\left( p_{5}-p_{4}\right) \Bigr] .  \label{example 2}
\end{multline}

\begin{figure}[htbp]
	\centering
	\adjincludegraphics[scale=0.4, trim={{0\width}
		{0\height} {0\width} {0\height}},clip]{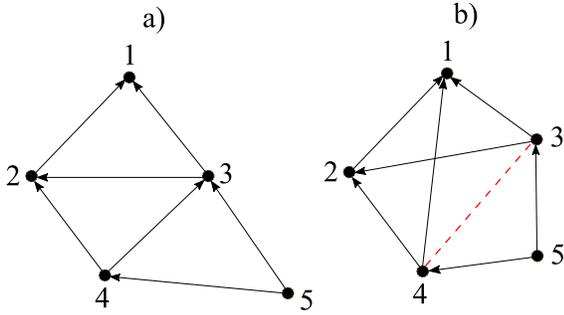}  
	\caption{Signed area examples.}
	\label{fig:signed-area}
\end{figure}
We introduce next is an extension of the concept of congruency that includes
the signed area.

\begin{defn}
\rm%
Henneberg frameworks $F=(G,p)$ and $\hat{F}=(G,\hat{p})$ where $G=(V,E)$ are
said to be \textit{strongly congruent} if they are congruent and $\chi
(p)=\chi (\hat{p})$.
\end{defn}

We represent the set of all frameworks that are strongly congruent to $F$ by 
$\text{SCgt} (F)$. It is obvious that frameworks that are congruent but not
strongly congruent are \textit{reflected} frameworks. Note that if $\hat{F}\ 
$\ is a reflected version of $F$, then $\chi (p)=-\chi (\hat{p})$. In
summary, the signed area function will be used to rule out the the
occurrence of framework ambiguities, especially reflections.

\begin{lemma}
\rm%
\label{lem:scgt} Henneberg frameworks $F=(G,p)$ and $\hat{F}=(G,\hat{p})$
are strongly congruent if and only if they are equivalent and $\chi (p)=\chi
(\hat{p})$.
\end{lemma}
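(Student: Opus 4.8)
\emph{Proof strategy.} I would first dispose of the forward implication in one line: strong congruence is congruence together with $\chi(p)=\chi(\hat p)$, and congruence --- equality of $\|p_i-p_j\|$ over \emph{all} pairs $i,j\in V$ --- in particular gives equality over the edge set, i.e.\ $\gamma(p)=\gamma(\hat p)$, which is exactly equivalence. So the content is the converse, and since $\chi(p)=\chi(\hat p)$ is part of that hypothesis, what I must show is that equivalence \emph{plus} $\chi(p)=\chi(\hat p)$ forces full congruence, i.e.\ $\|p_i-p_j\|=\|\hat p_i-\hat p_j\|$ for every pair, not merely the edges.

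The plan is to induct along the Henneberg construction of $G$. The base case ($n=2$, or $n=3$) is immediate because the edge set already contains every pair of vertices, so equivalence is already congruence. For the inductive step, let $k$ be the vertex inserted last, with its two outgoing edges to vertices $i$ and $j$, and let $G'$ be $G$ with $k$ and those two edges deleted --- again a Henneberg graph on $n-1$ vertices. Restricting $p$ and $\hat p$ to $V\setminus\{k\}$ gives Henneberg frameworks on $G'$ that are still equivalent (edges of $G'$ are edges of $G$) and whose signed-area vectors still agree (the triangles recorded by $\chi$ for $G'$ are exactly those for $G$ with the triangle $\{i,j,k\}$ removed). The induction hypothesis makes these reduced frameworks congruent, and since a distance-preserving bijection between finite subsets of $\mathbb{R}^2$ extends to an isometry of $\mathbb{R}^2$, there is an isometry $\mathcal{T}$ with $\mathcal{T}(p_a)=\hat p_a$ for all $a\ne k$; a short argument using $\chi(p)=\chi(\hat p)$ on a non-flat triangle of $G'$ (with a direct argument if $G'$ is realized collinearly) lets me take $\mathcal{T}$ orientation-preserving.

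It then remains to locate $p_k$. Setting $\tilde p_k=\mathcal{T}(p_k)$, the edges $(k,i)$, $(k,j)$ together with $\gamma(p)=\gamma(\hat p)$ give $\|\tilde p_k-\hat p_i\|=\|\hat p_k-\hat p_i\|$ and $\|\tilde p_k-\hat p_j\|=\|\hat p_k-\hat p_j\|$, so both $\tilde p_k$ and $\hat p_k$ lie in the intersection of a circle about $\hat p_i$ with a circle about $\hat p_j$; the induction hypothesis gives $\|\hat p_i-\hat p_j\|=\|p_i-p_j\|$, and provided these anchors are distinct that intersection is at most two points, which are mirror images across the line $\hat p_i\hat p_j$ and carry opposite values of $S(\hat p_i,\hat p_j,\cdot)$. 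Since $\mathcal{T}$ is orientation-preserving, $S(\hat p_i,\hat p_j,\tilde p_k)=S(p_i,p_j,p_k)=S(\hat p_i,\hat p_j,\hat p_k)$ by the matching of the $\{i,j,k\}$-component of $\chi$, so $\tilde p_k$ and $\hat p_k$ must be the \emph{same} intersection point; hence $\mathcal{T}(p_k)=\hat p_k$. Applying $\mathcal{T}$ to the remaining pairs gives $\|p_a-p_k\|=\|\hat p_a-\hat p_k\|$ for all $a$, which with the induction hypothesis is congruence of $F$ and $\hat F$; together with the assumed $\chi(p)=\chi(\hat p)$, they are strongly congruent.

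The hard part is this last step --- reconstructing $p_k$ from two side lengths and one signed area --- since it is exactly there that the signed-area data is used to kill the reflection freedom that equivalence alone leaves open; in particular one needs the anchor vertices $p_i,p_j$ distinct for the two-circle intersection to be conclusive, which holds whenever the frameworks have distinct vertex positions. The remaining ingredients --- the reduction to $G'$, passing between distance data and isometries, the orientation bookkeeping, and the forward direction --- I expect to be routine.
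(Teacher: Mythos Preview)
Your proof is correct and uses a genuinely different mechanism from the paper's. Both arguments share the same inductive skeleton along the Henneberg construction: the base case with $|V|\le 3$ is immediate, and the step assumes the result on the first $k-1$ vertices and then recovers the missing distances from the newly inserted vertex $k$. The paper, however, never passes through an ambient isometry $\mathcal{T}$. Instead, for each vertex $m$ not adjacent to $k$, it forms the quadrilateral on $\{i,j,m,k\}$, writes its signed area as a combination of triangle areas that are already known to match (the triangle $\{i,j,k\}$ from $\chi$, and triangles among the earlier vertices by the inductive strong congruence), and then invokes the general quadrilateral area formula, which expresses that area in terms of the six pairwise distances; since five of those agree by hypothesis, the sixth, $\|p_k-p_m\|=\|\hat p_k-\hat p_m\|$, drops out algebraically. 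Your route is more geometric and arguably more transparent --- once $\mathcal{T}$ is in hand, the two-circle-plus-sign argument is clean and handles all missing distances at once --- while the paper's is a direct distance computation that sidesteps the bookkeeping of extending a finite congruence to a global isometry and tracking its orientation. Both approaches tacitly need the anchors $p_i,p_j$ to be distinct (otherwise neither your circle intersection nor the paper's quadrilateral identity is conclusive); you flag this caveat explicitly, whereas the paper does not.
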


\begin{proof}
\quad See Appendix \ref{App Lemma}.
\end{proof}

\subsection{Quartic Polynomials}

\begin{lemma}
\rm%
\label{Lemma poly}\cite%
{stewart2015galois,dickson1917elementary,rees1922graphical,lazard1988quantifier}
For any quartic polynomial equation $ax^{4}+bx^{3}+cx^{2}+dx+e=0$ where $%
a\neq 0$,%
\begin{align*}
\Lambda
=&256a^{3}e^{3}-192a^{2}bde^{2}-128a^{2}c^{2}e^{2}+144a^{2}cd^{2}e
\\
&-27a^{2}d^{4}+144ab^{2}ce^{2}-6ab^{2}d^{2}e-80abc^{2}de
\\
&+18abcd^{3}+16ac^{4}e-4ac^{3}d^{2}-27b^{4}e^{2}+18b^{3}cde \\
&-4b^{3}d^{3}-4b^{2}c^{3}e+b^{2}c^{2}d^{2}, \\
P =&8ac-3b^{2}, \\
D =&64a^{3}e-16a^{2}c^{2}+16ab^{2}c-16a^{2}bd-3b^{4},
\end{align*}%
the equation has no real solution if $\Lambda >0$ and $P>0$, or $\Lambda >0$
and $D>0$.
\end{lemma}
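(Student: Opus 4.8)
The plan is to classify the roots of the quartic by the sign of its discriminant and then use Vieta's formulas to decide which sign combinations of $P$ and $D$ are compatible with four real roots; throughout, the coefficients are real. First I would normalize: the substitution $x=t-b/(4a)$ followed by division by $a$ turns the equation into a monic \emph{depressed} quartic $t^{4}+pt^{2}+qt+r=0$, and this affects neither the set of real solutions nor the signs of $\Lambda$, $P$, $D$. Indeed $\Lambda$ is (up to the positive factor $a^{6}$) the discriminant $\prod_{i<j}(\rho_{i}-\rho_{j})^{2}$ of the quartic, while $P$ and $D$ are invariant under the translation $x\mapsto x-b/(4a)$ and are multiplied by $a^{-2}$ and $a^{-4}$, respectively, under the scaling. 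For the depressed form one computes $P=8p$, $D=16(4r-p^{2})$, and $\Lambda$ has the same sign as $\prod_{i<j}(t_{i}-t_{j})^{2}$, where $t_{1},\dots,t_{4}\in\mathbb{C}$ are the roots of $t^{4}+pt^{2}+qt+r$.

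Next I would establish the discriminant trichotomy. Splitting $\prod_{i<j}(t_{i}-t_{j})^{2}$ according to how many of the $t_{i}$ are real, a conjugate pair $\{z,\bar z\}$ contributes exactly the one negative factor $(z-\bar z)^{2}=-4(\operatorname{Im}z)^{2}$, whereas every mixed real--complex or complex--complex cross factor has the form $|\,\cdot\,|^{2}$ or $|\,\cdot\,|^{4}$ and is therefore positive. Hence one conjugate pair makes $\prod_{i<j}(t_{i}-t_{j})^{2}<0$ and two conjugate pairs make it $>0$; consequently $\Lambda>0$ forces the roots to be simple and the number of real roots to be either $0$ or $4$.

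It then remains to rule out four real roots whenever $P>0$ or $D>0$. Suppose $t_{1},\dots,t_{4}\in\mathbb{R}$. Since the cubic term vanishes, $\sum_{i}t_{i}=0$, so $p=\sum_{i<j}t_{i}t_{j}=-\tfrac12\sum_{i}t_{i}^{2}\le 0$ and thus $P=8p\le 0$, contradicting $P>0$. For the other case, $D>0$ gives $r>p^{2}/4\ge 0$, so $\prod_{i}t_{i}>0$, which together with $\sum_{i}t_{i}=0$ forces exactly two positive roots $\alpha,\beta$ and two negative roots $-\gamma,-\delta$ with $\alpha+\beta=\gamma+\delta=:s$. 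A short computation gives $p=\alpha\beta+\gamma\delta-s^{2}$, and since $\alpha\beta,\gamma\delta\in(0,s^{2}/4]$ by AM--GM we obtain $4\alpha\beta\gamma\delta\le(s^{2}-\alpha\beta-\gamma\delta)^{2}=p^{2}$, i.e.\ $4r\le p^{2}$ and $D\le 0$, again a contradiction. Combining the three parts: when $\Lambda>0$ the equation has $0$ or $4$ real roots, and the hypothesis $P>0$ or $D>0$ excludes the latter, so it has no real solution.

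I expect the only delicate points to be the bookkeeping in the normalization step — verifying that $P$ and $D$, not merely $\Lambda$, are sign-invariant under the reduction, since their defining expressions mix several coefficients — and the sign computation for $\prod_{i<j}(t_{i}-t_{j})^{2}$ with complex-conjugate pairs present. Both are standard; in fact the trichotomy together with the ``$0$ or $4$'' dichotomy can simply be invoked from the cited references on quartics, leaving only the elementary Vieta/AM--GM estimates of the last step to be carried out.
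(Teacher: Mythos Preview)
The paper does not prove this lemma at all: it is stated with attribution to the classical sources \cite{stewart2015galois,dickson1917elementary,rees1922graphical,lazard1988quantifier} and used as a black box in the proof of Corollary~\ref{Cor poly}. Your proposal therefore goes well beyond what the paper supplies. The argument you outline is correct and is essentially the standard one found in those references: reduce to the depressed quartic, use the discriminant sign to split into the ``$0$ or $4$ real roots'' dichotomy, and then eliminate the four-real-root case via the symmetric-function identities $p=-\tfrac12\sum t_i^2$ (giving $P\le 0$) and the AM--GM estimate $4\alpha\beta\gamma\delta\le(s^2-\alpha\beta-\gamma\delta)^2$ (giving $D\le 0$). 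The only place to be careful is exactly where you flag it: checking that $P$ and $D$ are genuinely sign-invariant under the depression $x\mapsto t-b/(4a)$ and the division by $a$. For $P$ this is a one-line computation ($P'=8a\,c'-3(b')^2=8a(c-3b^2/8a)=P$, then $P''=P/a^2$); for $D$ the bookkeeping is longer but equally mechanical, and once $b'=0$ one indeed recovers $D''=16(4r-p^{2})$. With that verified, your proof is complete and self-contained, whereas the paper simply defers to the literature.
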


\begin{corollary}
\rm%
\label{Cor poly}Consider the equation 
\begin{equation}
ax^{4}+bx^{3}+cx^{2}+dx+e=0  \label{eq:quartic-to-solve-0}
\end{equation}%
where%
\begin{align*}
a =&-2\delta _{1}^{2}(\gamma -2)^{2} \\
b =&\delta _{1}(\gamma ^{2}-4)\sqrt{2\delta _{1}^{2}\delta _{2}^{2}-\delta
_{1}^{4}+2\delta _{1}^{2}\delta _{3}^{2}-\delta _{2}^{4}+2\delta
_{2}^{2}\delta _{3}^{2}-\delta _{3}^{4}} \\
c =&-\frac{1}{2}\delta _{1}^{4}\gamma ^{3}+\delta _{1}^{2}\left( \frac{3}{2}%
\delta _{1}^{2}+\delta _{2}^{2}+\delta _{3}^{2}\right) \gamma ^{2}-4\delta
_{1}^{2}\left( \delta _{2}^{2}+\delta _{3}^{2}\right) \gamma \\
d =&\frac{1}{4}\delta _{1}\gamma ^{2}(2\delta _{1}^{2}\gamma -3\delta
_{1}^{2}-2\delta _{2}^{2}-2\delta _{3}^{2}) \times \\
&\sqrt{2\delta _{1}^{2}\delta
_{2}^{2}-\delta _{1}^{4}+2\delta _{1}^{2}\delta _{3}^{2}-\delta
_{2}^{4}+2\delta _{2}^{2}\delta _{3}^{2}-\delta _{3}^{4}} \\
e =&-\frac{1}{8}\delta _{1}^{2}\gamma ^{3}(2\delta _{1}^{2}\delta
_{2}^{2}-\delta _{1}^{4}+2\delta _{1}^{2}\delta _{3}^{2}-\delta
_{2}^{4}+2\delta _{2}^{2}\delta _{3}^{2}-\delta _{3}^{4}),
\end{align*}%
$\gamma $ is a positive constant, and $\delta _{1},\delta _{2},\delta _{3}$
are the lengths of the edges of a triangle. If $\delta _{2}\neq \delta
_{3}\, $ and 
\begin{equation}
\,\left\vert \frac{\delta _{3}^{2}-\delta _{2}^{2}}{\delta _{1}^{2}}%
\right\vert <2\sqrt{2},  \label{cr:shape_restrict-0}
\end{equation}%
then there exists a $\underline{\gamma }>0$ such that (\ref{eq:quartic-to-solve-0}) has no real solution for $\gamma >\max \{\underline{\gamma },2\}$.
\end{corollary}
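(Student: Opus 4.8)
The plan is to invoke Lemma~\ref{Lemma poly}: it suffices to produce a threshold $\underline{\gamma}$ such that $\Lambda(\gamma)>0$ and $P(\gamma)>0$ for every $\gamma>\max\{\underline{\gamma},2\}$ (the restriction $\gamma>2$ moreover ensures $a=-2\delta_1^{2}(\gamma-2)^{2}\neq 0$, as required by that lemma). Throughout, abbreviate $\sigma:=\sqrt{2\delta_1^2\delta_2^2-\delta_1^4+2\delta_1^2\delta_3^2-\delta_2^4+2\delta_2^2\delta_3^2-\delta_3^4}$; by Heron's formula $\sigma$ equals $4$ times the area of the triangle with sides $\delta_1,\delta_2,\delta_3$, hence $\sigma>0$. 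With this notation $a,b,c,d,e$ — and therefore $\Lambda$ and $P$ — are genuine polynomials in $\gamma$, so it is enough to check that $\Lambda$ and $P$ have positive leading coefficients in $\gamma$. For $P=8ac-3b^{2}$ this requires no hypotheses: $a$ is quadratic in $\gamma$ with leading coefficient $-2\delta_1^2$ and $c$ is cubic with leading coefficient $-\tfrac12\delta_1^4$, so $8ac$ is quintic with leading coefficient $8\delta_1^6>0$, while $3b^{2}$ is only quartic; hence $P(\gamma)\to+\infty$.

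The substance of the argument is the sign of the leading coefficient of $\Lambda$. Substituting $a,\dots,e$ into the discriminant expression and collecting powers of $\gamma$, the coefficients of $\gamma^{17}$ and of $\gamma^{16}$ cancel, and the surviving top term is of degree $15$:
\[
\Lambda(\gamma)=-\tfrac{1}{16}\,\delta_1^{12}\,\sigma^{2}\,(\delta_2^2-\delta_3^2)^2\bigl[(\delta_2^2-\delta_3^2)^2-8\delta_1^4\bigr]\,\gamma^{15}+(\text{lower-order terms in }\gamma).
\]
Since $\delta_1>0$ and $\sigma>0$, and since $\delta_2\neq\delta_3$ makes $(\delta_2^2-\delta_3^2)^2>0$, this leading coefficient is positive precisely when $(\delta_2^2-\delta_3^2)^2<8\delta_1^4$, i.e.\ when $\bigl|(\delta_3^2-\delta_2^2)/\delta_1^2\bigr|<2\sqrt{2}$ — exactly condition~(\ref{cr:shape_restrict-0}). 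Hence, under the stated hypotheses, $\Lambda(\gamma)\to+\infty$; picking $\underline{\gamma}$ beyond the last sign change of $\Lambda$ and of $P$ then yields $\Lambda>0$ and $P>0$ for all $\gamma>\max\{\underline{\gamma},2\}$, and Lemma~\ref{Lemma poly} gives the claim.

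The main obstacle is obtaining that degree-$15$ leading coefficient: the quartic's discriminant starts at degree $17$ in $\gamma$, and two successive rounds of cancellation separate it from the relevant top term, so a brute-force expansion is cumbersome. A cleaner route — useful both to carry out and to check that step — is to use $\Lambda=a^{6}\prod_{i<j}(r_i-r_j)^2$ and to track the four roots as $\gamma\to\infty$. Dividing the quartic by $\gamma^{3}$, two roots escape to infinity like $\pm i\tfrac{\delta_1}{2}\sqrt{\gamma}$ (a complex-conjugate pair, from the dominant balance $ax^{4}+cx^{2}\approx 0$), while the remaining two are governed in the limit by $-\tfrac12\delta_1^4\bigl(x-\tfrac{\sigma}{2\delta_1}\bigr)^2=0$; writing this bounded pair as $\tfrac{\sigma}{2\delta_1}+O(1/\gamma)$ and retaining the first correction reduces to a quadratic whose discriminant equals, up to a positive multiple of $\sigma^{2}/\delta_1^{2}$, the quantity $(\delta_2^2-\delta_3^2)^2\bigl[(\delta_2^2-\delta_3^2)^2-8\delta_1^4\bigr]$. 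Forming $\prod_{i<j}(r_i-r_j)^2$ and multiplying by $a^{6}\sim 64\delta_1^{12}\gamma^{12}$ then reproduces the displayed leading coefficient — and, as a bonus, shows that under~(\ref{cr:shape_restrict-0}) the two bounded roots are themselves complex, which is slightly stronger than, and consistent with, the absence of real roots.
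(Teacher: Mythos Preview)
Your argument is correct and follows essentially the same route as the paper: invoke Lemma~\ref{Lemma poly}, compute $\Lambda$ and $P$ as polynomials in $\gamma$, and show that under $\delta_2\neq\delta_3$ and~(\ref{cr:shape_restrict-0}) their leading coefficients are positive, so both are eventually positive. The paper presents this slightly differently by first factoring $\Lambda=\gamma^{6}(\gamma-2)^{2}[\cdot]$ and $P=(\gamma-2)^{2}[\cdot]$ and writing the bracketed factor $\delta_1^4-2\delta_1^2(\delta_2^2+\delta_3^2)+(\delta_2^2-\delta_3^2)^2$ in product form to read off its sign from the triangle inequality, whereas you recognize that factor as $-\sigma^{2}$ via Heron's formula; your root-asymptotic cross-check is an addition not in the paper, but the proofs are otherwise the same.
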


\begin{proof}
\quad See Appendix \ref{App Corollary}.
\end{proof}

\begin{rmk}
	The geometric meaning of condition (\ref{cr:shape_restrict-0}) is discussed
	in Remark \ref{rmk:triangle-cond}. Although the existence of the lower bound $%
	\underline{\gamma }$ in Corollary \ref{Cor poly} is guaranteed, a
	closed-form expression for $\underline{\gamma }$ does not exist in general.
	However, $\underline{\gamma }$ can be easily determined by numerical means
	once $\delta _{1}$, $\delta _{2}$, and $\delta _{3}$ are selected.
\end{rmk}

\subsection{Stability Results}

\begin{lemma}
\rm%
\label{local_ISS}\cite{khalil2015nonlinear} Consider the system $\dot{x}%
=f\left( x,u\right) $ where $x$ is the state, $u$ is the control input, and $%
f(x,u)$ is locally Lipschitz in $(x,u)$ in some neighborhood of $(x=0,u=0)$.
Then, the system is locally input-to-state stable if and only if the
unforced system $\dot{x}=f(x,0)$ has a locally asymptotically stable
equilibrium point at the origin.
\end{lemma}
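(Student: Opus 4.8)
The plan is to treat this as the standard local characterization of input-to-state stability and to prove the two implications separately, with the reverse direction resting on a converse Lyapunov theorem. So I would first recognize that the forward implication is essentially definitional, and reserve the real work for the converse.

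\emph{Necessity} (local ISS $\Rightarrow$ the origin of $\dot x = f(x,0)$ is locally asymptotically stable). By the definition of local ISS there exist $\beta \in \mathcal{KL}$, $\gamma \in \mathcal{K}$, and constants $k_1,k_2>0$ such that $\|x(t)\| \le \beta(\|x(0)\|,t) + \gamma\bigl(\sup_{0\le\tau\le t}\|u(\tau)\|\bigr)$ for all $\|x(0)\|\le k_1$ and $\|u\|_\infty\le k_2$. Specializing to $u\equiv 0$ yields $\|x(t)\|\le\beta(\|x(0)\|,t)$ on $\{\|x(0)\|\le k_1\}$, which is precisely the $\mathcal{KL}$-estimate defining (uniform, hence — since the system is autonomous — plain) local asymptotic stability of the origin.

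\emph{Sufficiency}. Since the origin of $\dot x=f(x,0)$ is locally asymptotically stable and $f(\cdot,0)$ is locally Lipschitz, a converse Lyapunov theorem furnishes a $C^{1}$ function $V$ on a neighborhood $D$ of the origin together with $\alpha_1,\alpha_2,\alpha_3\in\mathcal{K}$ satisfying $\alpha_1(\|x\|)\le V(x)\le\alpha_2(\|x\|)$ and $\frac{\partial V}{\partial x}f(x,0)\le-\alpha_3(\|x\|)$ on $D$. Along the forced system I would write $\dot V = \frac{\partial V}{\partial x}f(x,0) + \frac{\partial V}{\partial x}\bigl(f(x,u)-f(x,0)\bigr)$; on a closed ball $B_r\subset D$ the gradient of $V$ is bounded by some $c>0$, and local Lipschitzness of $f$ in $u$ gives $\|f(x,u)-f(x,0)\|\le L\|u\|$, so $\dot V \le -\alpha_3(\|x\|) + cL\|u\|$. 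Hence whenever $\|x\|\ge\alpha_3^{-1}(2cL\|u\|)$ one has $\dot V \le -\frac{1}{2}\alpha_3(\|x\|)<0$, i.e., $V$ is a local ISS-Lyapunov function; the usual comparison-lemma argument then converts this into the $\mathcal{KL}$-plus-$\mathcal{K}$ bound, establishing local ISS.

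The main obstacle — the only point beyond routine bookkeeping — is the \emph{locality}: $V$, the gradient bound $c$, and the Lipschitz constant $L$ are valid only on $B_r$, so I must choose the admissible bounds $k_1$ on $\|x(0)\|$ and $k_2$ on $\|u\|_\infty$ small enough that every trajectory starting in the relevant sublevel set of $V$ remains in $B_r$ for all time (using the $\alpha_1,\alpha_2$ sandwich and the fact that $\dot V<0$ outside a small input-dependent neighborhood of the origin to trap the state). This is exactly the local version of the ISS-Lyapunov theorem, and once the domain constraints are arranged the remaining estimates are standard.
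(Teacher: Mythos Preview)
Your argument is correct and is essentially the standard proof found in Khalil's text; there is nothing to compare it against here, since the paper does not supply its own proof of this lemma but merely cites \cite{khalil2015nonlinear}. The necessity direction is indeed immediate from the definition, and for sufficiency your converse-Lyapunov-plus-Lipschitz-perturbation argument is exactly the route taken in the reference, including the care you flag about keeping trajectories inside the ball $B_r$ via suitable choice of $k_1,k_2$ and a trapping sublevel set of $V$.
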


\begin{lemma}
\rm%
\label{Lemma interconn}\cite{marquez2003nonlinear} Consider the
interconnected system%
\begin{equation}
\begin{array}{ll}
\Sigma _{1}\text{:} & \dot{x}=f(x,y) \\ 
\Sigma _{2}\text{:} & \dot{y}=g(y).%
\end{array}
\label{interconn}
\end{equation}%
If subsystem $\Sigma _{1}$ with input $y$ is locally input-to-state stable
and $y=0$ is a locally asymptotically stable equilibrium point of subsystem $%
\Sigma _{2}$, then $[x,y]=0$ is a locally asymptotically stable equilibrium
point of the interconnected system.
\end{lemma}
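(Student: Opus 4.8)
The plan is to prove the statement by the standard cascade (series-interconnection) argument, combining the input-to-state stability (ISS) estimate of $\Sigma_{1}$ with the local asymptotic stability of $\Sigma_{2}$. First I would make the hypotheses explicit. Local ISS of $\Sigma_{1}$ with input $y$ means that there exist a class-$\mathcal{KL}$ function $\beta$, a class-$\mathcal{K}$ function $\sigma$, and radii $r_{x},r_{y}>0$ such that, whenever $\|x(t_{0})\|<r_{x}$ and $\sup_{\tau\geq t_{0}}\|y(\tau)\|<r_{y}$, the corresponding solution of $\Sigma_{1}$ satisfies
\[
\|x(t)\|\leq \beta\bigl(\|x(t_{0})\|,\,t-t_{0}\bigr)+\sigma\!\left(\sup_{t_{0}\leq\tau\leq t}\|y(\tau)\|\right),\qquad t\geq t_{0}.
\]
Local asymptotic stability of the origin of $\Sigma_{2}$ means: (i) for every $\varepsilon_{y}>0$ there is $\delta_{y}(\varepsilon_{y})>0$ with $\|y(t_{0})\|<\delta_{y}\Rightarrow\|y(t)\|<\varepsilon_{y}$ for all $t\geq t_{0}$; and (ii) there is $\rho>0$ such that $\|y(t_{0})\|<\rho$ implies $y(t)\to 0$ as $t\to\infty$.

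Next I would establish Lyapunov stability of $[x,y]=0$. Given $\varepsilon>0$: since $\sigma\in\mathcal{K}$, choose $\varepsilon_{y}\in(0,r_{y}]$ with $\sigma(\varepsilon_{y})<\varepsilon/2$, and let $\delta_{y}=\delta_{y}(\varepsilon_{y})>0$ come from (i); since $\beta(\cdot,0)\in\mathcal{K}$, choose $\delta_{x}\in(0,r_{x}]$ with $\beta(\delta_{x},0)<\varepsilon/2$. For any initial condition with $\|x(t_{0})\|<\delta_{x}$ and $\|y(t_{0})\|<\delta_{y}$, part (i) keeps $\sup_{\tau\geq t_{0}}\|y(\tau)\|<\varepsilon_{y}\leq r_{y}$, so the ISS estimate applies and yields $\|x(t)\|\leq\beta(\delta_{x},0)+\sigma(\varepsilon_{y})<\varepsilon$ for all $t\geq t_{0}$, while $\|y(t)\|<\varepsilon_{y}\leq\varepsilon$; adjusting by the harmless constant $\sqrt{2}$ this proves stability of $[x,y]=0$.

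For attractivity, fix an initial condition inside the stability neighborhood just built that also satisfies $\|y(t_{0})\|<\rho$, so that $x(\cdot)$ is bounded and $y(t)\to 0$. In the ISS bound the term $\beta(\|x(t_{0})\|,t-t_{0})\to 0$ as $t\to\infty$ by definition of class-$\mathcal{KL}$; the obstacle is the term $\sigma(\sup_{t_{0}\leq\tau\leq t}\|y(\tau)\|)$, which need not decay because the supremum reaches back to $t_{0}$. To circumvent this I would exploit that the ISS property is a statement about the time-invariant flow of $\Sigma_{1}$, so it may be re-initialized at any later time $s$; applying the estimate on $[s,t]$ with $s=(t_{0}+t)/2$ gives
\[
\|x(t)\|\leq \beta\bigl(\|x(s)\|,\,t-s\bigr)+\sigma\!\left(\sup_{s\leq\tau\leq t}\|y(\tau)\|\right).
\]
Here $\|x(s)\|$ is bounded uniformly in $s$ by the stability just proved and $t-s=(t-t_{0})/2\to\infty$, so the first term vanishes; and $\sup_{\tau\geq s}\|y(\tau)\|\to 0$ as $s\to\infty$ since $y(t)\to 0$, so by continuity of $\sigma$ at $0$ the second term vanishes as well. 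Hence $x(t)\to 0$, and together with $y(t)\to 0$ this gives attractivity, completing the proof that $[x,y]=0$ is locally asymptotically stable. I expect this restart step to be the crux of the argument: the naive ISS inequality, read as a fixed bound parametrized by $t_{0}$, cannot by itself deliver $x(t)\to 0$ from $y(t)\to 0$, and one must propagate the estimate forward along the trajectory using time-invariance/causality of $\Sigma_{1}$; the remaining steps are routine manipulations with comparison functions.
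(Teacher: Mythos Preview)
The paper does not supply a proof of this lemma: it is stated with a citation to Marquez's textbook and then used as a black box in the proof of Theorem~\ref{Thm FA SI}. So there is no paper proof to compare against.

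Your argument is the standard cascade proof and is correct in outline. One point worth making explicit: when you re-initialize the ISS estimate at the intermediate time $s$, you need $\|x(s)\|<r_{x}$ (and $\sup_{\tau\geq s}\|y(\tau)\|<r_{y}$) for the local ISS bound to apply again. This is guaranteed provided the stability neighborhood built in the first step is taken with $\varepsilon\leq \min\{r_{x},r_{y}\}$; you implicitly rely on this when you say ``$\|x(s)\|$ is bounded uniformly in $s$ by the stability just proved,'' but the bound must be below $r_{x}$, not merely finite. With that tightened, the proof is complete.
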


\section{Problem Statement\label{Sec: Pro Stat}}

Consider a system of $N\geq 2$ mobile agents governed by the kinematic equation
\begin{equation}
\dot{p}_{i}=u_{i},\quad i=1,...,N  \label{SI model}
\end{equation} 
where $p_{i}\in 
\mathbb{R}
^{2}$ is the position of the $i$th agent relative to an Earth-fixed
coordinate frame, and $u_{i}\in 
\mathbb{R}
^{2}$ is the velocity-level control input.

The desired formation of agents is modeled by the directed framework $%
F^{\ast }=(G^{\ast },p^{\ast })$ where $G^{\ast }=(V^{\ast },E^{\ast })$, $%
\dim (E^{\ast })=a$, $p^{\ast }=\left[ p_{1}^{\ast },...,p_{N}^{\ast }\right]
$, and $p_{i}^{\ast }\in \mathbb{R}^{2}$ denotes the desired position of the 
$i$th agent. The fixed desired distance separating the $i$th and $j$th
agents is defined as 
\begin{equation}
d_{ji}=||p_{j}^{\ast }-p_{i}^{\ast }||>0,\quad i,j\in V^{\ast }  \label{dij}
\end{equation}%
We assume $F^{\ast }$ is constructed to satisfy the following conditions:

\textbf{Condition 1} $\text{out} (1)=0$, $\text{out} (2)=1$, and $\text{out} (i)=2,\forall i\geq 3$.

\textbf{Condition 2} If there is an edge between agents $i$ and $j$, the direction must be $%
i\leftarrow j$ if $i<j$.

The above conditions imply that $F^{\ast }$ should be a LFF-type minimally
persistent formation \cite{summers2011control}, where agent 1 is the leader,
agent 2 is the first follower, and agents $i$ for $i\geq 3$ are called
ordinary followers.

The actual formation of agents is modeled by $F(t)=(G^{\ast },p(t))$ where $%
p=\left[ p_{1},...,p_{N}\right] $. Note that $F$ and $F^{\ast }$ share the
same directed graph, which remains unchanged for all time. The physical
meaning of $(j,i)\in E^{\ast }$ in the actual formation is that agent $j$
can measure its relative position to agent $i$, $p_{j}-p_{i}$, but not vice
versa.

The control objective of this paper is to ensure%
\begin{equation}
F(t)\rightarrow \text{SCgt} (F^{\ast })\text{ as }t\rightarrow \infty ,
\label{objective}
\end{equation}%
which is equivalent to saying  
\begin{subequations}
\begin{align}
||p_{j}(t)-p_{i}(t)||& \rightarrow d_{ji}\text{ as }t\rightarrow \infty
,\,i,j\in V^{\ast }\text{ and}  \label{obj1} \\
\chi (p(t))& \rightarrow \chi (p^{\ast })\text{ as }t\rightarrow \infty .
\label{obj2}
\end{align}
\end{subequations}

The control objective will be quantified by two types of error variables. If
the relative position of two agents is defined as $\tilde{p}%
_{ji}=p_{j}-p_{i} $, the \textit{distance error} is given by \cite%
{krick2009stabilisation}

\begin{equation}
z_{ji}=||\tilde{p}_{ji}||^{2}-d_{ji}^{2}.  \label{eq:sigma}
\end{equation}%
The stacked vector of all distances errors is defined as $%
z=\left[z_{21},...,z_{ji},...\right] $,$\,\forall (j,i)\in E^{\ast }$. The \textit{area
error} is defined as \cite{anderson2017formation} 
\begin{equation}
\tilde{S}_{ijk}=S_{ijk}-S_{ijk}^{\ast },\text{ }(k,i),(k,j)\in E^{\ast }
\label{Stilda}
\end{equation}%
where $S_{ijk}=S(p)$ with $p=\left[p_{i},p_{j},p_{k}\right]$ and $S_{ijk}^{\ast
}=S(p^{\ast })$ with $p^{\ast }=[p_{i}^{\ast },p_{j}^{\ast },p_{k}^{\ast }]$%
. The stacked vector of all area errors is given by $\tilde{S}=[\tilde{S}%
_{123},...,\tilde{S}_{ijk},...]$,$\,\ \forall (k,i),(k,j)\in E^{\ast }$.

Since $F^{\ast }$ is typically specified in terms of the desired inter-agent
distances, a useful formula for calculating $S_{ijk}^{\ast }$ is given by 
\cite{zwillinger2002crc}%
\begin{equation}
S_{ijk}^{\ast }=\pm \sqrt{d_{ijk}\left( d_{ijk}-d_{ji}\right) \left(
d_{ijk}-d_{ki}\right) \left( d_{ijk}-d_{kj}\right) }  \label{S*}
\end{equation}%
where%
\begin{equation}
d_{ijk}=\frac{d_{ji}+d_{ki}+d_{kj}}{2}.  \label{d_ijk}
\end{equation}%
Note that if the order of agents $i,j,k$ is counterclockwise (resp.,
clockwise), then (\ref{S*}) takes the positive (resp., negative) sign.

\section{Control Law Formulation \label{Sec: SI}}


The control law will be dictated by the choice of potential function
associated with the error variables (\ref{eq:sigma}) and (\ref{Stilda}). To
this end, we consider the Lyapunov function candidate \cite%
{anderson2017formation} 
\begin{equation}
V_{k}=\left\{ 
\begin{array}{ll}
\dfrac{\alpha _{2}}{4}z_{21}^{2}, & \text{if }k=2 \\ 
\dfrac{\alpha _{k}}{4}\left( z_{ki}^{2}+z_{kj}^{2}\right) +\beta _{k}\tilde{S%
}_{ijk}^{2}, & \text{if }2<k\leq N%
\end{array}%
\right.  \label{eq:V_k}
\end{equation}%
where $\alpha _{k}$ and $\beta _{k}$ are positive constants, $i<j<k$, and $%
\left( k,i\right) ,\left( k,j\right) \in E^{\ast }$. Based on (\ref{eq:V_k})
and its time derivative, we propose the following control law 
\begin{subequations}
\label{eq:control}
\begin{align}
u_{1}& =0  \label{u1} \\
u_{2}& =-\alpha _{2}z_{21}\tilde{p}_{21}  \label{u2} \\
u_{k}& =-\alpha _{k}\left( z_{ki}\tilde{p}_{ki}+z_{kj}\tilde{p}_{kj}\right)
-\beta _{k}\tilde{S}_{ijk}J^{\intercal }\left( \tilde{p}_{ki}-\tilde{p}%
_{kj}\right)  \label{uk}
\end{align}%
\end{subequations}
for $2<k\leq N,$ $i<j<k,$ and $(k,i),(k,j)\in E^{\ast }$. The control law is
only a function of $\tilde{p}_{ki}$, $\tilde{p}_{kj}$, $d_{ki}$, $d_{kj}$
and $d_{ji}$ for $i,j\in \mathcal{N}_{k}(E^{\ast })$. Thus, the control law
is distributed since it only requires the $k$th agent to measure its
relative position to neighboring agents in the directed graph.

The following theorem states our main result.

\begin{theorem}
\label{Thm FA SI} 
\rm%
Let the initial conditions of the formation $F(t)=(G^{\ast },p(t))$ be such
that $p_{1}(0)\neq p_{2}(0)$, and let $F^{\ast }$ satisfy

\begin{equation}
\left\vert \frac{d_{ki}^{2}-d_{kj}^{2}}{d_{ji}^{2}}\right\vert <2\sqrt{2},
\label{triangle cond}
\end{equation}%
for all $i,j,k$ such that $\,2<k\leq N$, $i<j<k$, and $(k,i),(k,j)\in
E^{\ast }$. Then, the control (\ref{eq:control}) with 
\begin{equation}
\frac{\beta _{k}}{\alpha _{k}}>\left\{ 
\begin{array}{ll}
\dfrac{d_{kj}^{2}-d_{ji}^{2}/4}{d_{ji}^{2}}, & \text{if }d_{ki}=d_{kj} \\ 
\underline{\gamma }, & \text{if }d_{ki}\neq d_{kj}%
\end{array}%
\right.  \label{gain ratio}
\end{equation}%
where $\underline{\gamma }$ is determined from Corollary \ref{Cor poly},
renders $ \left[ z,\tilde{S} \right]=0$ asymptotically stable and ensures $F(t)\rightarrow %
\text{SCgt} (F^{\ast })$ as $t\rightarrow \infty $.
\end{theorem}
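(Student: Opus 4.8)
The plan is to exploit the Leader--First--Follower (Henneberg) structure to recast the closed loop as a chain of cascaded subsystems and then analyze a single triangular subsystem at a time. Since agent $1$ is stationary, agent $2$ senses only agent $1$, and every agent $k\ge 3$ senses exactly two lower-indexed neighbors $i<j<k$, the closed loop has the triangular form $\dot p_1=0$, $\dot p_2=u_2(p_1,p_2)$, $\dot p_k=u_k(p_k;p_i,p_j)$. I would first dispose of the base: from \eqref{u1}--\eqref{u2}, $\dot z_{21}=-2\alpha_2 z_{21}(z_{21}+d_{21}^2)$, and the hypothesis $p_1(0)\neq p_2(0)$ is precisely $z_{21}(0)>-d_{21}^2$, so $z_{21}(t)\to 0$ (exponentially once near $0$); hence $\tilde p_{21}$ stays bounded, $\dot p_2$ is integrable, and $p_2(t)$ converges to a point at distance $d_{21}$ from the fixed $p_1(0)$ with $\dot p_2(t)\to 0$.

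The core of the argument is the single triangular subsystem for agent $k$, and the key observation is that the control is a gradient, $u_k=-\nabla_{p_k}V_k$, with $V_k$ of \eqref{eq:V_k} evaluated at the \emph{actual} positions; this follows from $\nabla_{p_k}z_{k\ell}=2\tilde p_{k\ell}$ and $\nabla_{p_k}S_{ijk}=\tfrac12 J(p_i-p_j)=\tfrac12 J^\intercal(\tilde p_{ki}-\tilde p_{kj})$. Freezing the two parents at a configuration with $\|p_i-p_j\|=d_{ji}$ yields the ``unforced'' subsystem $\dot p_k=-\nabla_{p_k}V_k(p_k)$, a gradient flow whose Lyapunov function is radially unbounded in $p_k$, so trajectories are bounded and converge to the set of critical points of $V_k$. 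I would then rule out every critical point except the correct position $p_k^\ast$ (the unique zero of $V_k$): a collinear $p_k$ can never be critical, because there $S_{ijk}=0$ so the area term is a nonzero vector orthogonal to $p_i-p_j$ while the distance term is parallel to it; and for a non-collinear $p_k$, solving $\partial_{p_k}V_k=0$ for the two coordinates and introducing $w=z_{ki}+z_{kj}$ reduces the critical-point condition to a polynomial in $w$ that factors as $w$ times a quartic, the latter being exactly \eqref{eq:quartic-to-solve-0} with $\delta_1=d_{ji}$, $\{\delta_2,\delta_3\}=\{d_{ki},d_{kj}\}$, and $\gamma=\beta_k/\alpha_k$. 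Condition \eqref{triangle cond} is the hypothesis \eqref{cr:shape_restrict-0}, and choosing $\beta_k/\alpha_k>\underline\gamma$ as in \eqref{gain ratio} makes that quartic have no real root, hence no spurious critical point. When $d_{ki}=d_{kj}$ the reduction must be handled separately: the factor $\partial_x V_k=\alpha_k x(z_{ki}+z_{kj}+d_{ji}^2)$ splits the analysis into the symmetry axis $x=0$, where the remaining candidates solve $y^2+h^\ast y+\tfrac{\beta_k d_{ji}^2}{4\alpha_k}=0$ with $h^{\ast2}=d_{ki}^2-d_{ji}^2/4$ (no real root iff $\beta_k/\alpha_k>(d_{kj}^2-d_{ji}^2/4)/d_{ji}^2$, the first branch of \eqref{gain ratio}), and the off-axis circle $z_{ki}+z_{kj}=-d_{ji}^2$, a short computation showing the induced $y$-value is inconsistent with that circle under the same bound. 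Thus, under \eqref{triangle cond}--\eqref{gain ratio} the unforced $k$-subsystem has $[z_{ki},z_{kj},\tilde S_{ijk}]=0$ as its only equilibrium and is globally asymptotically stable by LaSalle; in particular it is locally asymptotically stable, so by Lemma \ref{local_ISS} the $k$-subsystem, with input the parents' deviation from nominal, is locally input-to-state stable.

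Asymptotic stability of $[z,\tilde S]=0$ for the full closed loop then follows by applying Lemma \ref{Lemma interconn} iteratively down the cascade $\Sigma_2\to\Sigma_3\to\cdots\to\Sigma_N$: each $\Sigma_k$ is locally ISS in the upstream states, and the upstream block is locally asymptotically stable by induction. For the almost-global conclusion $F(t)\to\text{SCgt}(F^\ast)$ assuming only $p_1(0)\neq p_2(0)$, I would argue by induction on $k$ that $p_k(t)$ converges to its correct position and $\dot p_k(t)\to 0$. Granting that the parents converge with vanishing velocities (with $\dot p_1\equiv 0$), the equation $\dot p_k=-\nabla_{p_k}V_k(p_k;p_i(t),p_j(t))$ is an asymptotically autonomous perturbation of the nominal gradient flow; from $\dot V_k=-\|\nabla_{p_k}V_k\|^2+(\partial_{p_i}V_k)\dot p_i+(\partial_{p_j}V_k)\dot p_j$ together with the exponentially decaying parent velocities one first shows $p_k(t)$ remains bounded, and then global asymptotic stability of the limiting (nominal) flow forces $p_k(t)\to p_k^\ast$ and hence $\dot p_k\to 0$, closing the induction; Lemma \ref{lem:scgt} identifies the resulting limit framework as an element of $\text{SCgt}(F^\ast)$.

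I expect the main obstacle to be exactly this last step. Because the interconnected triangles are genuinely coupled rather than a clean unidirectional chain of autonomous systems, the local interconnection lemma does not by itself yield any region-of-attraction statement, and one must carefully establish boundedness of each $p_k$ while its parents are still in transient motion and then invoke an asymptotically-autonomous / converging-input argument to transfer the global behavior of the nominal gradient flow to the perturbed flow. The other technical point---that the quartic \eqref{eq:quartic-to-solve-0} has no real root for large enough $\gamma$---is isolated in Corollary \ref{Cor poly}, so it reduces here to matching coefficients and checking the $d_{ki}=d_{kj}$ degeneration by hand.
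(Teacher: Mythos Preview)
Your plan is essentially the paper's proof: exploit the LFF cascade, handle the $(1,2)$ base pair by the scalar ODE for $z_{21}$, freeze the parents of each ordinary follower $k$ at their nominal positions (equivalently set $\Xi_{k-1}=0$), observe that the unforced $k$-subsystem is the gradient flow of $V_k$ so that $\dot V_k=-\|u_k\|^2$, reduce the critical-point equations to a quartic handled by Corollary~\ref{Cor poly} in the scalene case and by a direct discriminant computation in the isosceles case, then invoke Lemma~\ref{local_ISS} followed by Lemma~\ref{Lemma interconn} inductively, and close with Lemma~\ref{lem:scgt}.

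Two points of comparison are worth recording. First, your elimination variable differs from the paper's: the paper places $p_i,p_j$ symmetrically on the $x$-axis, eliminates $x$ from the two scalar equations $\partial_{p_k}V_k=0$, and obtains a quintic in the $y$-coordinate that factors as $(y-2S_{ijk}^\ast/d_{ji})$ times the quartic of Corollary~\ref{Cor poly}. Your proposed reduction through $w=z_{ki}+z_{kj}$ is a legitimate alternative parametrization, but the resulting quartic in $w$ is related to the paper's quartic in $y$ by a M\"obius substitution, not equality; the claim that it is ``exactly \eqref{eq:quartic-to-solve-0}'' needs to be checked, or you can simply switch to the paper's $y$-elimination, which lines up with Corollary~\ref{Cor poly} verbatim. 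Second, your final paragraph on asymptotically-autonomous perturbations actually goes \emph{beyond} what the paper establishes: the paper's use of Lemmas~\ref{local_ISS} and~\ref{Lemma interconn} yields only \emph{local} asymptotic stability of $[z,\tilde S]=0$, and the ``almost-global'' language in the introduction is not backed by a region-of-attraction argument in the proof. Your converging-input/converging-state sketch is the right way to try to close that gap, but be aware that it is additional work not present in the paper.
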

\begin{proof}
\quad The open-loop dynamics for (\ref{eq:sigma}) and (\ref{Stilda}) are
given by 
\begin{equation}
\dot{z}_{ji}=2\tilde{p}_{ji}^{\intercal }(u_{j}-u_{i})  \label{zij_dot1}
\end{equation}%
and 
\begin{equation}
\overset{\cdot }{\tilde{S}}_{ijk}=\frac{1}{2}\left[ (u_{k}-u_{i})^{\intercal
}J\tilde{p}_{kj}+\tilde{p}_{ki}^{\intercal }J(u_{k}-u_{j})\right] ,
\label{Stilda_dot1}
\end{equation}%
where (\ref{SI model}) was used. Therefore, the time derivative of (\ref%
{eq:V_k}) becomes%
\begin{equation}
\dot{V}_{k}=\left\{ 
\begin{array}{ll}
\alpha _{2}z_{21}\tilde{p}_{21}^{\intercal }(u_{2}-u_{1}), & \text{if }k=2
\\ 
&  \\ 
\alpha _{k}%
\bigl[ z_{ki}\tilde{p}_{ki}^{\intercal }(u_{k}-u_{i})  & \\
 +z_{kj} \tilde{p}_{kj}^{\intercal }(u_{k}-u_{j}) \bigr] & \\
+\beta _{k}\tilde{S}_{ijk}%
\bigl[  (u_{k}-u_{i})^{\intercal }J\tilde{p}_{kj}  &  \\ 
 +\tilde{p}_{ki}^{\intercal }J(u_{k}-u_{j}) \bigr] , & \text{if }%
2<k\leq N%
\end{array}%
\right.  \label{Vkdot1}
\end{equation}%

\textit{Step 1:} Consider the subsystem composed of agents 1 and 2 only.
Substituting (\ref{u1}) and (\ref{u2}) into (\ref{zij_dot1}) yields 
\begin{equation}
\dot{z}_{21}=-2\alpha _{2}z_{21}||\tilde{p}_{21}||^{2}=-2\alpha
_{2}z_{21}\left( z_{21}+d_{21}^{2}\right)   \label{z21_dot1}
\end{equation}%
where (\ref{eq:sigma}) was used. The solution to the above nonlinear ODE is 
\begin{equation}
z_{21}(t)=\frac{d_{21}^{2}z_{21}(0)}{d_{21}^{2} \exp (2d_{21}^{2}\alpha
_{2}t) \left( z_{21}(0)+d_{21}^{2}\right) -z_{21}(0)}.  \label{z21 sol}
\end{equation}%
From (\ref{eq:sigma}), it is clear that $z_{21}\in \left[ -d_{21}^{2},\infty
\right) $ where $z_{21}=-d_{21}^{2}$ corresponds to agents 1 and 2 being
collocated. If $z_{21}(0)>-d_{21}^{2}$, we can show from (\ref{z21 sol})
that $z_{21}(t)>-d_{21}^{2}$ $\forall t>0$ as follows:\newline
\begin{equation}
	\begin{array}{l}
		 z_{21}(t)>-d_{21}^{2} \\
		 \Leftrightarrow \text{ }\dfrac{z_{21}(0)}{%
		d_{21}^{2} \exp(2d_{21}^{2}\alpha _{2}t) \left( z_{21}(0)+d_{21}^{2}\right)
		-z_{21}(0)}>-1 \\ 
		 \Leftrightarrow \text{ }z_{21}(0)>z_{21}(0)-d_{21}^{2} \exp(2d_{21}^{2} \alpha _{2}t) \left( z_{21}(0)+d_{21}^{2}\right)  \\ 
		 \Leftrightarrow \text{ }d_{21}^{2} \exp(2d_{21}^{2}\alpha _{2}t) \left(
		z_{21}(0)+d_{21}^{2}\right) >0 \\ 
		 \Leftrightarrow \text{ }z_{21}(0)>-d_{21}^{2}.%
	\end{array}
	\label{z21 proof}
\end{equation}

Now, after substituting (\ref{u1}) and (\ref{u2}) into (\ref{Vkdot1}), we
obtain%
\begin{equation}
\dot{V}_{2}=-\alpha _{2}^{2}z_{21}^{2}||\tilde{p}_{21}||^{2}\leq 0.
\label{V2dot}
\end{equation}%
Since $z_{21}(t)>-d_{21}^{2}$ implies $||\tilde{p}_{21}(t)||>0$, we can see
that $\dot{V}_{2}=0$ only at $z_{21}=0$. Therefore, $\dot{V}_{2}$ is
negative definite and $z_{21}=0$ is asymptotically stable for $%
z_{21}(0)>-d_{21}^{2}$ (or equivalently, $p_{1}(0)\neq p_{2}(0)$).

\textit{Step 2:} Consider that a third agent is added to the previous
subsystem as shown in Figure \ref{directed triangle}. We can view this new
system as the interconnected system 
\begin{subequations}
\label{intercon1}
\begin{align}
\dot{\xi}_{3}& =f_{3}(\xi _{3},\Xi _{2})  \label{zeta3dot} \\
\dot{\Xi}_{2}& =g_{3}(\Xi _{2})  \label{Xi2dot}
\end{align}%
where $\xi _{3}:=[z_{31},z_{32},\tilde{S}_{123}]$ is the state of the error
dynamics of agent 3 and $\Xi _{2}=\xi _{2}:=z_{21}$. 

\begin{figure}[htbp]
\centering
\adjincludegraphics[scale=0.8,trim={{0\width}
{0\height} {0\width} {0\height}},clip]{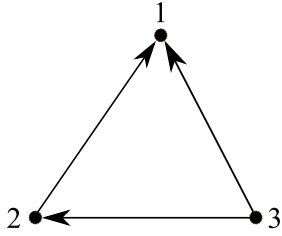}  
\caption{Three-agent system.}
\label{directed triangle}
\end{figure}

We seek to establish the input-to-state stability of (\ref{zeta3dot}) with
respect to input $\Xi _{2}$ via Lemma \ref{local_ISS}. When $\Xi _{2}=0$, we
can see from (\ref{u2}) that $u_{2}=0$. Therefore, from (\ref{Vkdot1}) with $%
k=3$ under the condition that $\Xi _{2}=0$, we have that 
\end{subequations}
\begin{equation}
\dot{V}_{3}=\left[ \alpha _{3}\left( z_{31}\tilde{p}_{31}+z_{32}\tilde{p}%
_{32}\right) ^{\intercal }+\beta _{3}\tilde{S}_{123}\left( \tilde{p}_{31}-%
\tilde{p}_{32}\right) ^{\intercal }J\right] u_{3}.  \label{V3dot1}
\end{equation}%
Substituting (\ref{uk}) with $k=3$ in (\ref{V3dot1}) gives%
\begin{equation}
\dot{V}_{3}=-\left\Vert \alpha _{3}\left( z_{31}\tilde{p}_{31}+z_{32}\tilde{p%
}_{32}\right) +\beta _{3}\tilde{S}_{123}J^{\intercal }\left( \tilde{p}_{31}-%
\tilde{p}_{32}\right) \right\Vert ^{2}.  \label{V3dot2}
\end{equation}%
If $\xi _{3}=0$ is the only value at which $\dot{V}_{3}=0$, then (\ref%
{V3dot1}) is negative definite and (\ref{zeta3dot}) is input-to-state
stable. It then follows that the origin of (\ref{intercon1}), i.e., $[\Xi
_{2},\xi _{3}]=0$, is asymptotically stable according to Lemma \ref{Lemma
interconn}. To this end, note that translational and rotational motions of the triangle
will not change the value of $\dot{V}_{3}$ since it is a function of the
relative position of agents and the triangle area \cite%
{sakurama2016distributed,sun2017rigid}. Thus, without the loss of
generality, let $p_{1}=[-d_{21}/2,0]$, $p_{2}=[d_{21}/2,0]$, and $%
p_{3}=[x,y] $ for simplicity. Then, $\dot{V}_{3}=0$ is equivalent to 
\begin{equation}
\begin{aligned}
& \left( 2x^{2}+2y^{2}+\dfrac{d_{21}^{2}}{2}-d_{32}^{2}-d_{31}^{2}\right) x \\
& \hspace{1in} + \dfrac{d_{21}}{2}\left( 2d_{21}x-d_{31}^{2}+d_{32}^{2}\right) =0 \\ 
& \hspace{1in}\text{and} \\ 
& \left( 2x^{2}+2y^{2}+\dfrac{d_{21}^{2}}{2}-d_{32}^{2}-d_{31}^{2}\right) y \\
& \hspace{1in} + \dfrac{\beta _{3}}{\alpha _{3}}\left( \dfrac{d_{21}^{2}}{2}%
y-d_{21}S_{123}^{\ast }\right) =0.%
\end{aligned}
\label{eq:eqs}
\end{equation}%
One solution to (\ref{eq:eqs}) is 
\begin{equation}
x=\left( d_{31}^{2}-d_{32}^{2}\right) /\left( 2d_{21}\right) \quad \text{and}%
\quad y=2S_{123}^{\ast }/d_{21},  \label{sol step2}
\end{equation}%
which corresponds to $\xi _{3}=0$. We will show next that $\beta _{3}/\alpha
_{3}$ can be selected such that this is the only solution to (\ref{eq:eqs}).
This proof will be conducted for two distinct cases: an isosceles triangle
and the non-isosceles case.

\textbf{(Case 2a)} Consider that the triangle is such that $d_{32}=d_{31}$.
From (\ref{eq:eqs}), we get 
\begin{equation}
\begin{aligned}
& \left( 2x^{2}+2y^{2}+\dfrac{3d_{21}^{2}}{2}-2d_{32}^{2}\right) x=0 \\ 
& \hspace{1.5in}\text{and} \\ 
& \left( 2x^{2}+2y^{2}+\frac{1}{2}d_{21}^{2}-2d_{32}^{2}\right) y \\
& \hspace{1in} +\dfrac{\beta_{3}}{2\alpha _{3}}d_{21}^{2}\left( y-\dfrac{1}{2}\sqrt{%
4d_{32}^{2}-d_{21}^{2}}\right) =0.%
\end{aligned}
\label{eq:eqs-reduce}
\end{equation}

The first equation of (\ref{eq:eqs-reduce}) implies $x=0$ or $%
x^{2}+y^{2}=d_{32}^{2}-\frac{3}{4}d_{21}^{2}$. Substituting $x=0$ into the
second equation of (\ref{eq:eqs-reduce}) yields 

\begin{multline}
\left( y-\frac{1}{2}\sqrt{4d_{32}^{2}-d_{21}^{2}}\right)  \times \\ 
\left( 8y^{2}+4%
\sqrt{4d_{32}^{2}-d_{21}^{2}}y+\dfrac{2\beta _{3}}{\alpha _{3}}%
d_{21}^{2}\right) =0.  \label{eq:factorize}
\end{multline}

It is easy to show that when 
\begin{equation}
\dfrac{\beta _{3}}{\alpha _{3}}>\frac{d_{32}^{2}-\frac{1}{4}d_{21}^{2}}{%
d_{21}^{2}},  \label{gain cond1}
\end{equation}%
the discriminant of $8y^{2}+4\sqrt{4d_{32}^{2}-d_{21}^{2}}y+\frac{2\beta _{3}%
}{\alpha _{3}}d_{21}^{2}$ is less than 0. That is, inequality (\ref{gain
cond1}) will lead to $x=0$ and $y=\frac{1}{2}\sqrt{4d_{32}^{2}-d_{21}^{2}}$
being the only solution to (\ref{eq:factorize}).

Now, substituting 
\begin{equation}
x^{2}+y^{2}=d_{32}^{2}-\frac{3}{4}d_{21}^{2}  \label{x2 y2}
\end{equation}
into the second equation of (\ref{eq:eqs-reduce}) gives 
\begin{equation}
2y\left( \dfrac{\beta _{3}}{\alpha _{3}}-2\right) =\dfrac{\beta _{3}}{\alpha
_{3}}\sqrt{4d_{32}^{2}-d_{21}^{2}}.  \label{2nd eq}
\end{equation}%

After squaring (\ref{2nd eq}) and using (\ref{x2 y2}) again to eliminate $y$%
, we obtain%
\begin{equation}
\begin{aligned}
&2x^{2}\left( \dfrac{\beta _{3}}{\alpha _{3}}-2\right) ^{2} = \\
&-\underbrace{%
	\left[ d_{21}^{2}\left( \dfrac{\beta _{3}}{\alpha _{3}}\right) ^{2}+\left(
	8d_{32}^{2}-6d_{21}^{2}\right) \dfrac{\beta _{3}}{\alpha _{3}}%
	+6d_{21}^{2}-8d_{32}^{2}\right] }.  \label{eq:sub2} \\
&\hspace{1.5in}\phi (\beta _{3}/\alpha _{3})  
\end{aligned}
\end{equation}

Since the left hand side of (\ref{eq:sub2}) is nonnegative for any $\beta
_{3}/\alpha _{3}$, this equation has no solution if $\phi (\beta _{3}/\alpha
_{3})>0$. This means that $\phi (\beta _{3}/\alpha _{3})$ should have no
real roots, or $\beta _{3}/\alpha _{3}$ should be chosen to be greater
(resp., smaller) than the largest (resp., smallest) root. The discriminant
of $\phi (\beta _{3}/\alpha _{3})$ is given by 
\begin{equation}
\Delta =4\left( 4d_{32}^{2}-3d_{21}^{2}\right) \left(
4d_{32}^{2}-d_{21}^{2}\right) .  \label{discrim}
\end{equation}

If $\frac{4}{3}d_{32}^{2}<d_{21}^{2}<4d_{32}^{2}$, then $\Delta
<0 $ for any $\beta _{3}/\alpha _{3}>0$ and $\phi (\beta _{3}/\alpha _{3})>0$.

If $d_{21}^{2}\leq \frac{4}{3}d_{32}^{2}$, then $\Delta \geq 0$
and $\phi (\beta _{3}/\alpha _{3})$ has real roots. Since the smallest root
is less than zero, the only option for ensuring $\phi (\beta _{3}/\alpha
_{3})>0$ is to choose $\beta _{3}/\alpha _{3}$ greater than the largest
root, i.e.,%
\begin{equation}
\dfrac{\beta _{3}}{\alpha _{3}}>\frac{3d_{21}^{2}-4d_{32}^{2}+\sqrt{%
(4d_{32}^{2}-3d_{21}^{2})(4d_{32}^{2}-d_{21}^{2})}}{d_{21}^{2}}.
\label{gain cond2}
\end{equation}

Note that the case where $d_{21}^{2}\geq 4d_{32}^{2}$ is not
possible since it contradicts the fact that $d_{32}+d_{31}=2d_{32}>d_{21}$.

Combining the three cases, we see that (\ref{2nd eq}) will have no solution
if (\ref{gain cond2}) holds. Finally, it is not difficult to show that (\ref%
{gain cond1}) is a sufficient condition for (\ref{gain cond2}). Therefore,
for the isosceles triangle, the condition for $\xi _{3}=0$ to be the only
value where $\dot{V}_{3}=0$ is given by (\ref{gain cond1}).

\textbf{(Case 2b)} Consider that the triangle is not isosceles ($d_{32}\neq
d_{31})$. After substituting the first equation of (\ref{eq:eqs}) into the
second one, eliminating $x$, and factoring the resulting polynomial of $y$,
we obtain 
\begin{equation}
\left( y-\frac{2S_{123}^{\ast }}{d_{21}}\right) \left(
c_{4}y^{4}+c_{3}y^{3}+c^{2}y^{2}+c_{1}y+c_{0}\right) =0  \label{poly 2b}
\end{equation}%
where 
\begin{subequations}
\label{coeff}
\begin{align}
c_{4}=& -2d_{21}^{2}\left( \dfrac{\beta _{3}}{\alpha _{3}}-2\right) ^{2}
\label{c4} \\
c_{3}=& d_{21}\left[ \left( \dfrac{\beta _{3}}{\alpha _{3}}\right) ^{2}-4%
\right] \times \nonumber \\
& \sqrt{%
2d_{21}^{2}d_{32}^{2}-d_{21}^{4}+2d_{21}^{2}d_{31}^{2}-d_{32}^{4}+2d_{32}^{2}d_{31}^{2}-d_{31}^{4}%
}  \label{c3} \\
c_{2}=& -\frac{1}{2}d_{21}^{4}\left( \dfrac{\beta _{3}}{\alpha _{3}}\right)
^{3}+d_{21}^{2}\left( \frac{3}{2}d_{21}^{2}+d_{32}^{2}+d_{31}^{2}\right)
\left( \dfrac{\beta _{3}}{\alpha _{3}}\right) ^{2} \nonumber \\
& -4d_{21}^{2}\left(
d_{32}^{2}+d_{31}^{2}\right) \dfrac{\beta _{3}}{\alpha _{3}}  \label{c2} \\
c_{1}=& \frac{1}{4}d_{21}\left( \dfrac{\beta _{3}}{\alpha _{3}}\right)
^{2}\left( 2d_{21}^{2}\dfrac{\beta _{3}}{\alpha _{3}}%
-3d_{21}^{2}-2d_{32}^{2}-2d_{31}^{2}\right) \times  \nonumber \\ 
& \sqrt{%
2d_{21}^{2}d_{32}^{2}-d_{21}^{4}+2d_{21}^{2}d_{31}^{2}-d_{32}^{4}+2d_{32}^{2}d_{31}^{2}-d_{31}^{4}%
}  \label{c_1} \\
c_{0}=& -\frac{1}{8}d_{21}^{2}\left( \dfrac{\beta _{3}}{\alpha _{3}}\right)^{3} \times \nonumber \\ 
& \left(
2d_{21}^{2}d_{32}^{2}-d_{21}^{4}+2d_{21}^{2}d_{31}^{2}-d_{32}^{4}+2d_{32}^{2}d_{31}^{2}-d_{31}^{4}\right) .
\label{c0}
\end{align}
\end{subequations}

Note that the quartic polynomial in (\ref{poly 2b}) is similar to (\ref%
{eq:quartic-to-solve-0}). Thus, by Corollary \ref{Cor poly}, if 

\begin{equation}
\left\vert \frac{d_{31}^{2}-d_{32}^{2}}{d_{21}^{2}}\right\vert <2\sqrt{2}
\label{case2b cond1}
\end{equation}%
and $\beta _{3}/\alpha _{3}> \max \{\underline{\gamma },2\}$ (see proof of Corollary \ref{Cor poly} for detail of $\underline{\gamma }$), the quartic polynomial has no
real solution, and $y=2S_{123}^{\ast }/d_{21}$ is the only solution to (\ref%
{poly 2b}). 

\textit{Step k:} The process of adding a vertex $k$ with two outgoing edges
to any two distinct vertices $i$ and $j$ of the previous graph can be
followed one step at a time, resulting at each step in the interconnected
system 
\begin{subequations}
	\label{subsys k}
	\begin{align}
		\dot{\xi}_{k}& =f_{k}(\xi _{k},\Xi _{k-1})  \label{psi_kdot} \\
		\dot{\Xi}_{k-1}& =g_{k}(\Xi _{k-1})  \label{Xi_k-1dot}
	\end{align}
\end{subequations}
where $\xi _{k}:=[z_{ki},z_{kj},\tilde{S}_{ijk}]$, $(k,i),(k,j)\in E^{\ast }$
is the state of error dynamics of the $k$th agent and $\Xi _{k-1}:=[\xi
_{2},...,\xi _{k-1}]$.

Note that the asymptotic stability of $\Xi _{k-1}=0$ for (\ref{Xi_k-1dot})
was already established in Step $k-1$. Therefore, we only need to check
the input-to-state stability of (\ref{psi_kdot}) with respect to input $\Xi
_{k-1}$. To this end, when $\Xi _{k-1}=0$, (\ref{Vkdot1}) becomes 
\begin{equation}
\dot{V}_{k}=\left[ \alpha _{k}\left( z_{ki}\tilde{p}_{ki}+z_{kj}\tilde{p}%
_{kj}\right) ^{\intercal }+\beta _{k}\tilde{S}_{ijk}\left( \tilde{p}_{ki}-%
\tilde{p}_{kj}\right) ^{\intercal }J\right] u_{k}.  \label{Vkdot2}
\end{equation}

Now, substituting (\ref{uk}) into (\ref{Vkdot2}) gives%
\begin{equation}
\dot{V}_{k}=-\left\Vert \alpha _{k}\left( z_{ki}\tilde{p}_{ki}+z_{kj}\tilde{p%
}_{kj}\right) +\beta _{k}\tilde{S}_{ijk}J^{\intercal }\left( \tilde{p}_{ki}-%
\tilde{p}_{kj}\right) \right\Vert ^{2}.  \label{Vkdot3}
\end{equation}

Similar to Step 2, we can show that if the gain ratio $\beta _{k}/\alpha
_{k} $ is selected according to (\ref{gain ratio}) and the edges of triangle 
$\Delta ijk$ satisfy (\ref{triangle cond}), then (\ref{Vkdot3}) is negative
definite. As a result, (\ref{psi_kdot}) is input-to-state stable and $[\Xi
_{k-1},\xi _{k}]=0$ in (\ref{subsys k}) is asymptotically stable by Lemma %
\ref{Lemma interconn}.

Repeating this process until $k=N$ leads to the conclusion that $[\xi
_{2},...,\xi _{N}]=0$ is asymptotically stable, which implies $%
z(t)\rightarrow 0$ and $\chi (p(t))\rightarrow \chi (p^{\ast })$ as $%
t\rightarrow \infty $. Given that $F^{\ast }$ and $F(t)$ have the same edge set and $F^{\ast }$ is minimally persistent by design, then we have that $F(t)\rightarrow \text{SCgt} (F^{\ast })$ as $t\rightarrow \infty $  from Lemma \ref{lem:scgt}.
\end{proof}

\begin{rmk}
	Theorem \ref{Thm FA SI} only requires that the leader and first follower not
	be collocated at $t=0$. If agents 1 and 2 were initialized at the same
	position, then $u_{1}=u_{2}=0$ and they would remain at this position
	forever. In other words, the condition $p_{1}=p_{2}$ is an invariant set. As
	for the ordinary followers, (\ref{eq:control}) guarantees formation
	acquisition regardless of their initial conditions. For example, if agents
	2, 3, 4, and 5 in Figure \ref{fig:signed-area} are all initially
	collocated, then $u_{4}=u_{5}=0$ at $t=0$ which means agents 4 and 5 will
	not move at first. However, $u_{3}\neq 0$, so agent $3$ will move. This
	results in $u_{4}\neq 0$, causing agent 4 to move, and finally $u_{5}$
	becomes nonzero, so agent 5 moves.
\end{rmk}

\begin{rmk} \label{rmk:triangle-cond}
	Condition (\ref{triangle cond}) on the desired formation has the following
	geometric interpretation. Consider the three vertices in Figure \ref{fig:
		working-region-geo} where, for simplicity, $p_{i}^{\ast }=\left[ -d_{ji}/2,0%
	\right] $, $p_{j}^{\ast }=\left[ d_{ji}/2,0\right] $, and $p_{k}^{\ast }=%
	\left[ x,y\right] $.
	\footnote{%
		Translation and rotation of these vertices as a rigid body will not affect
		the following analysis since it is only dependent on their distances.} 
	
	Given that%
	\begin{equation}
	\begin{array}{l}
	\left\vert \dfrac{d_{ki}^{2}-d_{kj}^{2}}{d_{ji}^{2}}\right\vert <2\sqrt{2} \\
	\Longleftrightarrow \dfrac{\left\vert \left( x+d_{ji}/2\right)
		^{2}+y^{2}-\left( x-d_{ji}/2\right) ^{2}-y^{2}\right\vert }{d_{ji}^{2}}<2%
	\sqrt{2} \\ 
	\Longleftrightarrow \dfrac{\left\vert 2xd_{ji}\right\vert }{d_{ji}^{2}}<2%
	\sqrt{2} \\ 
	\Longleftrightarrow \left\vert x\right\vert <\sqrt{2}d_{ji},%
	\end{array}
	\label{triang cond equiv}
	\end{equation}%
	any point $p_{k}^{\ast }$ inside the shaded region in Figure \ref{fig:
		working-region-geo} satisfies (\ref{triangle cond}). It is important to
	point out that (\ref{triangle cond}) is sufficient but not necessary for
	stability. For example, consider a triangular formation with $d_{21}=1$, $%
	d_{31}=2.1$, and $d_{32}=3$, which does not satisfy (\ref{triangle cond}). If
	however $\beta _{3}/\alpha _{3}$ is selected in the range $(10.42,13.55)$,
	the stability result of Theorem 1 will hold. In fact, the gain ratio $\beta
	_{k}/\alpha _{k}$ and (\ref{gain ratio}) impose a lower bound on the
	relative weight of the distance error and area error in the potential
	function (\ref{eq:V_k}) in order to guarantee stability. 
	\begin{figure}[htbp]
		\centering
		\adjincludegraphics[scale=0.45,trim={{0.01\width}
			{0\height} {0\width} {0\height}},clip]{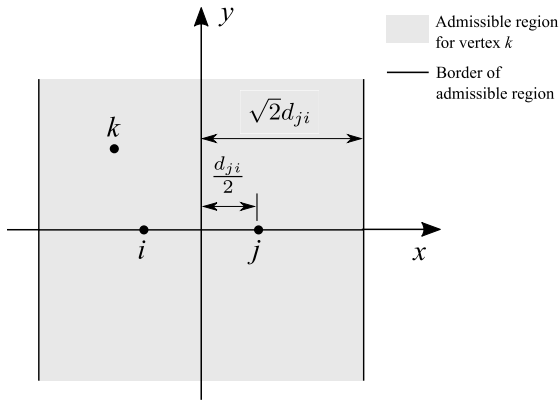}  
		\caption{Geometeric interpretation of (\protect\ref{triangle cond}).}
		\label{fig: working-region-geo}
	\end{figure}
\end{rmk}

\begin{rmk}
	Mathematically, the role of the area-based term $\beta _{k}\tilde{S}%
	_{ijk}^{2}$ is to guarantee the existence of a \textit{unique} minimum for
	the potential function (\ref{eq:V_k}) in the Euclidean plane, and thus avoid
	the system from converging to an undesirable local minima. To illustrate
	this, consider a triangular formation where $p_{1}=[-1,0]$, $p_{2}=[1,0]$, $%
	p_{3}=[x,y]$, and $d_{21}=d_{31}=d_{32}=2$, and let $W=\frac{1}{4}%
	(z_{31}^{2}+z_{32}^{2})$ be the potential function with only the distance
	error terms of (\ref{eq:V_k}) with $k=3$. In Figure \ref{potential}, we plot 
	$\ln (W+1)$ and $\ln (V_{3}+1)$ versus $p_{3}$ to have a better view of
	their minima.\footnote{%
		Since functions $\ln (V+1)$ and $V$ are positively correlated, this variable
		change does not affect the function extrema.} We can clearly see that $%
	W(p_{3})$ has two minima, corresponding to the desired position for agent 3
	and its reflected position, whereas $V_{3}(p_{3})$ has a unique minimum.
	\begin{figure}[htbp]
		\centering
		\adjincludegraphics[scale=0.5,trim={{0.0835\width}
			{0.0709\height} {0.1247\width} {0.0480\height}},clip]{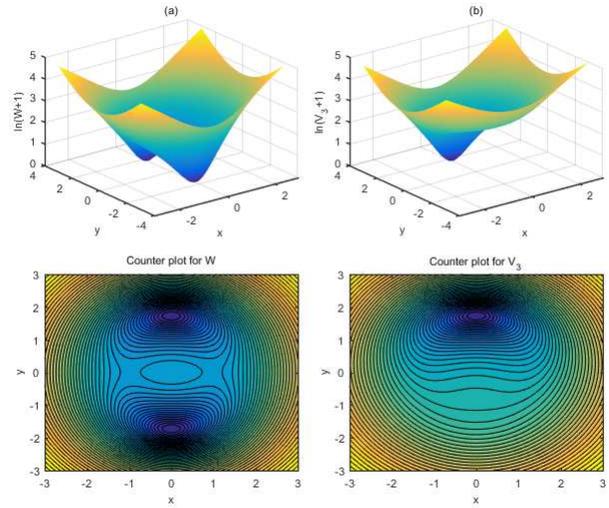}  
		\caption{a) Potential function $W(q_{3})$ and corresponding counter plot; b)
			Potential function $V_{3}(q_{3})$ and corresponding counter plot.}
		\label{potential}
	\end{figure}
\end{rmk}

\section{Conclusions \label{Sec: Concl}}
This paper presented a 2D formation control scheme that uses distance and
signed area information to guarantee convergence to the desired formation
shape. The asymptotic convergence
result is valid under mild conditions on the edge lengths of the
triangulated-like framework and when the leader agent and the first follower
are not collocated at time zero. The scheme is applicable to systems with
any number of agents governed by the single-integrator model.

\section{Appendix}
\subsection{Proof of Lemma \protect\ref{lem:scgt} \label{App Lemma}}
\textit{(Proof of }$\Rightarrow $\textit{)} If $F$ and $\hat{F}$ are
strongly congruent, then $||p_{i}-p_{j}||=||\hat{p}_{i}-\hat{p}_{j}||$,$\,\
\forall i,j\in V$ and $\chi (p)=\chi (\hat{p})$ by definition. Therefore,
since $E\subset V\times V$, we know $||p_{i}-p_{j}||=||\hat{p}_{i}-\hat{p}%
_{j}||$, $\forall (i,j)\in E$, i.e., $F$ and $\hat{F}$ are
equivalent.

\textit{(Proof of }$\Leftarrow $\textit{)} If $\dim (V)=3$, then framework
equivalency and congruency are equivalent, so the conditions for strong
congruency are trivially satisfied.

If a vertex is added such that $\dim (V)=4$, the resulting framework would
have two additional edges and one additional triangle. Consider without loss
of generality the framework in Figure \ref{quadrilaterals}(a), where the
area of the quadrilateral is given by $S^{Q}:=S_{123}-S_{234}$. Since $\chi
(p)=\chi (\hat{p})$, we know that $S^{Q}(p)=S^{Q}(\hat{p})$, so it follows
from the general quadrilateral area formula \cite{zwillinger2002crc} that
\begin{multline}
\dfrac{1}{4} \Bigl[ 4\left\Vert p_{3}-p_{2}\right\Vert ^{2}\left\Vert
p_{4}-p_{1}\right\Vert ^{2} -\bigl( \left\Vert p_{2}-p_{1}\right\Vert
^{2}+\left\Vert p_{4}-p_{3}\right\Vert ^{2} \\
-\left\Vert
p_{3}-p_{1}\right\Vert ^{2}-\left\Vert p_{4}-p_{2}\right\Vert ^{2}\bigr)^{2} \Bigr]^\frac{1}{2} \\ 
=\dfrac{1}{4} \Bigl[ 4\left\Vert \hat{p}_{3}-\hat{p}_{2}\right\Vert
^{2}\left\Vert \hat{p}_{4}-\hat{p}_{1}\right\Vert ^{2} -\bigl( \left\Vert 
\hat{p}_{2}-\hat{p}_{1}\right\Vert ^{2} + \left\Vert \hat{p}_{4}-\hat{p}%
_{3}\right\Vert ^{2} \\ 
-\left\Vert \hat{p}_{3}-\hat{p}_{1}\right\Vert
^{2}-\left\Vert \hat{p}_{4}-\hat{p}_{2}\right\Vert ^{2}\bigr) ^{2} \Bigr]^\frac{1}{2}. \label{quad area}
\end{multline}

Since $F$ and $\hat{F}$ are equivalent, $\left\Vert p_{2}-p_{1}\right\Vert
=\left\Vert \hat{p}_{2}-p_{1}\right\Vert $, $\left\Vert
p_{3}-p_{1}\right\Vert =\left\Vert \hat{p}_{3}-\hat{p}_{1}\right\Vert $, $%
\left\Vert p_{3}-p_{2}\right\Vert =\left\Vert \hat{p}_{3}-\hat{p}%
_{2}\right\Vert $, $\left\Vert p_{4}-p_{2}\right\Vert =\left\Vert \hat{p}%
_{4}-\hat{p}_{2}\right\Vert $, and $\left\Vert p_{4}-p_{3}\right\Vert
=\left\Vert \hat{p}_{4}-\hat{p}_{3}\right\Vert $. Therefore, we have from (%
\ref{quad area}) that $\left\Vert p_{4}-p_{1}\right\Vert =\left\Vert \hat{p}%
_{4}-\hat{p}_{1}\right\Vert $, so $F$ and $\hat{F}$ are strongly congruent
for $\dim (V)=4$. Since the quadrilateral signed area formula in (\ref{quad
	area}) applies to both convex and concave quadrilaterals, a similar analysis
exists for all other cases, some of which are shown in Figure \ref%
{quadrilaterals}. 
\begin{figure}[htbp]
	\centering
	\adjincludegraphics[scale=0.4,trim={{0\width}
		{0\height} {0\width} {0\height}},clip]{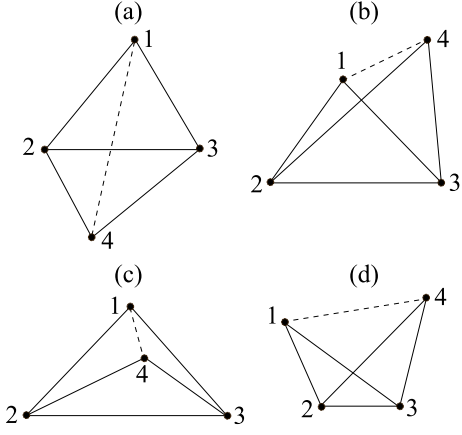}  
	\caption{Convex and concave quadrilaterals with five edges.}
	\label{quadrilaterals}
\end{figure}

As more vertices are added, each additional vertex will create a
quadrilateral, so above process can be repeated to show that $F$ and $\hat{F}
$ are strongly congruent for $\dim (V)=n$.

\subsection{Proof of Corollary \protect\ref{Cor poly} \label{App Corollary}}

Based on Lemma \ref{Lemma poly}, (\ref{eq:quartic-to-solve-0}) has no real
solution if the following quantities are positive: 
\begin{subequations}
	\begin{align}
		\Lambda = & \gamma ^{6}\left( \gamma -2\right) ^{2} \Bigl[ \frac{-1}{16}\delta
		_{1}^{12}\left( \delta _{2}^{2}-\delta _{3}^{2}\right) ^{2}\left( 8\delta
		_{1}^{4}-\left( \delta _{2}^{2}-\delta _{3}^{2}\right) ^{2}\right) \times \nonumber \\
		& \left(
		\delta _{1}^{4}-2\delta _{1}^{2}\left( \delta _{2}^{2}+\delta
		_{3}^{2}\right) +\left( \delta _{2}^{2}-\delta _{3}^{2}\right) ^{2}\right)
		\gamma ^{7} +f_{1}(\delta _{1},\delta _{2},\delta _{3},\gamma )\Bigr]  \label{Delta} \\
		P= & \left( \gamma -2\right) ^{2}\bigl[ 8\delta_{1}^{6} \gamma^{3} + f_{2}(\delta_{1}, \delta_{2}, \delta _{3}, \gamma )\bigr]   \label{P}
	\end{align}
\end{subequations}
where $f_{1}(\cdot )$ and $f_{2}(\cdot )$ are polynomials in $\gamma $ of,
at most, degree 6 and 2, respectively.

Given a polynomial $p(\gamma )=a_{n}\gamma ^{n}+\sum_{i=0}^{n-1}a_{i}\gamma
^{i}$ where $n\geq 3$ is an odd integer, we can see from Figure \ref%
{polynomials} that $p(\gamma )=0$ has at least one real root. Denote the
largest or the unique real root by $\gamma ^{\ast }$. Then, 
\begin{equation}
\begin{array}{ll}
p(\gamma )>0\text{ }\forall \gamma >\gamma ^{\ast }, & \text{if }a_{n}>0 \\ 
p(\gamma )<0\text{ }\forall \gamma >\gamma ^{\ast }, & \text{if }a_{n}<0.%
\end{array}
\label{poly sign}
\end{equation}%

\begin{figure}[htbp]
	\centering
	\adjincludegraphics[scale=0.5,trim={{0\width}
		{0\height} {0\width} {0\height}},clip]{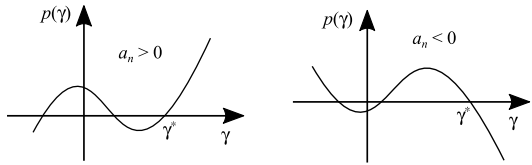}  
	\caption{Odd degree polynomial.}
	\label{polynomials}
\end{figure}

Consider that $\gamma \neq 2 $ in (\ref{Delta}) and (\ref{P}). It follows
from the conditions $\delta _{2}\neq \delta _{3}$ and $\left\vert \left(
\delta _{3}^{2}-\delta _{2}^{2}\right) /\delta _{1}^{2} \right\vert < 2\sqrt{%
	2}$ that
\begin{multline*}
\delta _{1}^{4}-2\delta _{1}^{2}(\delta _{2}^{2}+\delta _{3}^{2})+(\delta
_{2}^{2}-\delta _{3}^{2})^{2}= \\
(\delta _{1}+\delta _{2}+\delta
_{3})(\delta _{1}-\delta _{2}-\delta _{3}) (\delta _{1}+\delta _{2}-\delta _{3})(\delta _{1}-\delta
_{2}+\delta _{3}) <0.%
\end{multline*}

Therefore, the coefficient of $\gamma ^{7}$ in (\ref{Delta}) is positive,
and $\Lambda >0$ for $\gamma >\gamma _{1}^{\ast }$ where $\gamma _{1}^{\ast
} $ is the lower bound from (\ref{poly sign}). Likewise, the coefficient of $%
\gamma ^{3}$ in (\ref{P}) is positive, and $P>0$ for $\gamma >\gamma
_{2}^{\ast }$ where $\gamma _{2}^{\ast }$ is some lower bound. Thus, the
overall sufficient condition for $\Lambda >0$ and $P>0$ is given by $\gamma
>\max \{\underline{\gamma },2\}$ where $\underline{\gamma }:=\max \{\gamma
_{1}^{\ast },\gamma _{2}^{\ast }\}$.


\begin{thebibliography}{99}
\bibitem{olfati2004consensus}
R.~Olfati-Saber and R.~M. Murray, ``Consensus problems in networks of agents
with switching topology and time-delays,'' \emph{IEEE Transactions on
	Automatic Control}, vol.~49, no.~9, pp. 1520--1533, 2004.

\bibitem{ren2008distributed}
W.~Ren and R.~W. Beard, \emph{Distributed consensus in multi-vehicle
	cooperative control}.\hskip 1em plus 0.5em minus 0.4em\relax Springer-Verlag
London, 2008.

\bibitem{dorfler2010geometric}
F.~Dorfler and B.~Francis, ``Geometric analysis of the formation problem for
autonomous robots,'' \emph{IEEE Transactions on Automatic Control}, vol.~55,
no.~10, pp. 2379--2384, 2010.

\bibitem{krick2009stabilisation}
L.~Krick, M.~E. Broucke, and B.~A. Francis, ``Stabilisation of infinitesimally
rigid formations of multi-robot networks,'' \emph{International Journal of
	Control}, vol.~82, no.~3, pp. 423--439, 2009.

\bibitem{anderson2008rigid}
B.~D.~O. Anderson, C.~Yu, B.~Fidan, and J.~M. Hendrickx, ``Rigid graph control
architectures for autonomous formations,'' \emph{IEEE Control Systems},
vol.~28, no.~6, 2008.

\bibitem{asimow1979rigidity}
L.~Asimow and B.~Roth, ``The rigidity of graphs, {II},'' \emph{Journal of
	Mathematical Analysis and Applications}, vol.~68, no.~1, pp. 171--190, 1979.

\bibitem{izmestiev2009infinitesimal}
I.~Izmestiev, ``Infinitesimal rigidity of frameworks and surfaces,''
\emph{Lectures on Infinitesimal Rigidity, Kyushu University, Japan}, 2009.

\bibitem{ferreira2016distance}
E.~D. Ferreira-Vazquez, E.~G. Hernandez-Martinez, J.~J. Flores-Godoy,
G.~Fernandez-Anaya, and P.~Paniagua-Contro, ``Distance-based formation
control using angular information between robots,'' \emph{Journal of
	Intelligent \& Robotic Systems}, vol.~83, no. 3-4, pp. 543--560, 2016.

\bibitem{ferreira2016adaptive}
E.~D. Ferreira-Vazquez, J.~J. Flores-Godoy, E.~G. Hernandez-Martinez, and
G.~Fernandez-Anaya, ``Adaptive control of distance-based spatial formations
with planar and volume restrictions,'' in \emph{Control Applications (CCA),
	2016 IEEE Conference on}.\hskip 1em plus 0.5em minus 0.4em\relax IEEE, 2016,
pp. 905--910.

\bibitem{anderson2017formation}
B.~D.~O. Anderson, Z.~Sun, T.~Sugie, S.~Azuma, and K.~Sakurama, ``Formation
shape control with distance and area constraints,'' \emph{IFAC Journal of
	Systems and Control}, vol.~1, pp. 2--12, 2017.

\bibitem{summers2011control}
T.~H. Summers, C.~Yu, S.~Dasgupta, and B.~D.~O. Anderson, ``Control of
minimally persistent leader-remote-follower and coleader formations in the
plane,'' \emph{IEEE Transactions on Automatic Control}, vol.~56, no.~12, pp.
2778--2792, 2011.

\bibitem{jackson2007notes}
B.~Jackson, ``Notes on the rigidity of graphs,'' in \emph{Levico Conference
	Notes}, vol.~4, 2007.

\bibitem{hendrickx2007directed}
J.~M. Hendrickx, B.~D.~O. Anderson, J.~Delvenne, and V.~D. Blondel, ``Directed
graphs for the analysis of rigidity and persistence in autonomous agent
systems,'' \emph{International Journal of Robust and Nonlinear Control},
vol.~17, no. 10-11, pp. 960--981, 2007.

\bibitem{yu2007three}
C.~Yu, J.~M. Hendrickx, B.~Fidan, B.~D.~O. Anderson, and V.~D. Blondel, ``Three
and higher dimensional autonomous formations: Rigidity, persistence and
structural persistence,'' \emph{Automatica}, vol.~43, no.~3, pp. 387--402,
2007.

\bibitem{bereg2005certifying}
S.~Bereg, ``Certifying and constructing minimally rigid graphs in the plane,''
in \emph{Proceedings of the Twenty-first Annual Symposium on Computational
	Geometry}.\hskip 1em plus 0.5em minus 0.4em\relax ACM, 2005, pp. 73--80.

\bibitem{stewart2015galois}
I.~N. Stewart, \emph{Galois theory}, 4th~ed.\hskip 1em plus 0.5em minus
0.4em\relax CRC Press, 2015.

\bibitem{dickson1917elementary}
L.~E. Dickson, \emph{Elementary theory of equations}.\hskip 1em plus 0.5em
minus 0.4em\relax J. Wiley \& Sons, Inc., 1914.

\bibitem{rees1922graphical}
E.~L. Rees, ``Graphical discussion of the roots of a quartic equation,''
\emph{The American Mathematical Monthly}, vol.~29, no.~2, pp. 51--55, 1922.

\bibitem{lazard1988quantifier}
D.~Lazard, ``Quantifier elimination: optimal solution for two classical
examples,'' \emph{Journal of Symbolic Computation}, vol.~5, no. 1-2, pp.
261--266, 1988.

\bibitem{khalil2015nonlinear}
H.~K. Khalil, \emph{Nonlinear control}.\hskip 1em plus 0.5em minus 0.4em\relax
Pearson Education, Harlow, UK, 2015.

\bibitem{marquez2003nonlinear}
H.~J. Marquez, \emph{Nonlinear control systems: analysis and design}.\hskip 1em
plus 0.5em minus 0.4em\relax Wiley, Hoboken, NJ., 2003.

\bibitem{zwillinger2002crc}
D.~Zwillinger, \emph{CRC standard mathematical tables and formulae},
31st~ed.\hskip 1em plus 0.5em minus 0.4em\relax CRC Press, 2002.

\bibitem{sakurama2016distributed}
K.~Sakurama, ``Distributed control of networked multi-agent systems for
formation with freedom of special euclidean group,'' in \emph{Decision and
	Control (CDC), 2016 IEEE 55th Conference on}.\hskip 1em plus 0.5em minus
0.4em\relax IEEE, 2016, pp. 928--932.

\bibitem{sun2017rigid}
Z.~Sun, B.~D.~O. Anderson, M.~Deghat, and H.~Ahn, ``Rigid formation control of
double-integrator systems,'' \emph{International Journal of Control},
vol.~90, no.~7, pp. 1403--1419, 2017.
\end{thebibliography}

\end{document}